\pgfplotsset{width=10cm,compat=1.9}
\newtheorem{theorem}{Theorem}
\newtheorem{result}{Result}
\newtheorem{example}{Example}
\newtheorem{corollary}{Corollary}
\tikzset{
  cross/.style   ={draw, cross out,inner sep=1mm, node distance = 2mm},
  triangle/.style   ={draw, regular polygon, regular polygon sides=3,inner sep=0.5mm, node distance = 2mm}
}
\newcommand{\bx}{\mathbf{x}}
\newcommand{\bz}{\mathbf{z}}
\newcommand{\by}{\mathbf{y}}
\newcommand{\bs}{\mathbf{s}}
\newcommand{\bn}{\mathbf{n}}
\newcommand{\bu}{\mathbf{u}}
\newcommand{\bt}{\mathbf{t}}
\newcommand{\cT}{\cal{T}}
\newcommand{\cN}{\cal{N}}
\newcommand{\cB}{\cal{B}}
\newcommand{\cM}{\cal{M}}
\newcommand{\vast}{\bBigg@{4}}
\newcommand{\Vast}{\bBigg@{3.5}}
\begin{document}
\title{Channel Conditions for the Optimality of Interference Decoding Schemes for $K$-user Gaussian Interference Channels} 


\author{\IEEEauthorblockN{Ragini Chaluvadi, Bolli Madhuri and Srikrishna Bhashyam}
\IEEEauthorblockA{Department of Electrical Engineering\\
Indian Institute of Technology Madras\\
Chennai 600036, India\\
Email: skrishna@ee.iitm.ac.in }
}


\maketitle

\begin{abstract}
The sum capacity of the general $K$-user Gaussian Interference Channel (GIC) is known only when the channel coefficients are such that treating interference as noise (TIN) is optimal. The Han-Kobayashi (HK) scheme achieves the best known achievable rate region for the $K$-user interference channel (IC). Simple HK schemes are HK schemes with Gaussian signaling, no time sharing, and no private-common power splitting. The class of simple HK (S-HK) schemes includes the TIN scheme and schemes that involve various levels of interference decoding and cancellation at each receiver. We derive conditions under which simple HK schemes achieve sum capacity for general $K$-user Gaussian ICs. These results generalize existing sum capacity results for the TIN scheme to the class of simple HK schemes. 
\end{abstract}


\section{Introduction}
The capacity region and sum capacity of the general $K$-user Gaussian Interference Channel (GIC) are not known. The 2-user GIC is the most well understood special case \cite{car75,Sat81,annvee09,shakrache09,ShaKraChe08,motkha09,ettswa08}. 
The capacity region of the 2-user GIC under strong interference conditions was obtained in \cite{car75,Sat81}. The sum capacity when the interference can be treated as noise was obtained in \cite{annvee09,shakrache09,ShaKraChe08,motkha09}. The sum capacity under mixed interference conditions was obtained in \cite{motkha09}. The capacity region of the 2-user GIC within one bit was derived in \cite{ettswa08} using suitably chosen Han-Kobayashi (HK) schemes \cite{HanKob81}.

To the best of our knowledge, the sum capacity of the general $K$-user GIC is known only in the regime where Treating Interference as Noise (TIN) is optimal. For the general $K$-user GIC, the channel conditions under which TIN achieves sum capacity were obtained in \cite[Thm. 3]{ShaKraChe08}\cite[Thm. 9]{shang2008}. The sum capacity of some {\em partially connected} $K$ user GICs were derived in \cite{Tun11,liuerk11,PraBhaCho16,GnaChaBha17} under some channel conditions. $Z$-like GICs, where the channel matrix is upper triangular with a specific structure, were studied in \cite{Tun11}, cascade GIC was studied in \cite{liuerk11}, and many-to-one and one-to-many GICs were studied in \cite{PraBhaCho16,GnaChaBha17}. Some new outer bounds on the capacity of the $K$-user GIC were recently derived in \cite{Nam17}. {\em Simple} HK (S-HK)
schemes with Gaussian signalling, no timesharing, and no common-private power splitting, achieve sum capacity under the channel conditions obtained in \cite{Tun11,liuerk11,PraBhaCho16,GnaChaBha17}. S-HK schemes include the simple and practical TIN scheme and schemes that involve various levels of interference decoding and cancellation at each receiver as special cases. However, sum capacity results are available only for the TIN scheme for the fully-connected $K$-user GIC.

In this paper, we generalize the sum capacity optimality results for the TIN scheme in \cite{ShaKraChe08,shang2008}, to all S-HK schemes. In particular, we derive two sets of channel conditions under which S-HK schemes are sum capacity optimal for general $K$-user GICs. Exisiting results for the sum capacity of the 2-user GIC and some partially connected $K$ user GICs  in \cite{liuerk11,PraBhaCho16,GnaChaBha17} can be obtained as special cases of these results. Using Monte Carlo simulations, we evaluate the probability that these channel conditions for sum capacity are satisfied for some random wireless networks and observe that this probability is significant.

\section{Channel model and simple HK schemes}
\label{sec:model}
The $K$-user GIC in standard form \cite{ShaKraChe08} is given by
\begin{equation}
y_i= x_i+\underset{ \substack{j = 1\\ j \neq i}}{\overset{K}{\sum}} h_{ij} x_j+z_i, \ \forall i \in [K] \triangleq \{1,\hdots, K\},
\label{eqn:GICmodel}
\end{equation}
where $x_i$ is transmitted by transmitter $i$, $y_i$ is received by receiver $i$, $  h_{ij}$ is the real channel coefficient from transmitter $j$ to receiver $i$ and $z_i \sim \mathcal{CN}(0,1)$ is the additive white Gaussian noise at receiver $i$. Let $P_i$ denote the transmit power constraint at transmitter $i$. As in \cite{liuerk11}, we call HK schemes with Gaussian signaling, no timesharing, and no common-private power splitting as simple HK schemes. Each S-HK scheme is specified by the sets $\{ I(1), I(2), \hdots, I(K)\}$, $I(i) \subseteq [K] \backslash \{i\}, \ \forall i$. In each such S-HK scheme, at receiver $i$, interference from transmitters $j \in I(i)$ are treated as noise and interference from transmitters $j \in D(i) \triangleq \{[K] \backslash \{I(i),i \} \}$ are decoded. For the TIN scheme, $I(i) = [K]\backslash \{i\}$, $\forall i$.

\section{Sum Capacity Results}
In this section, we derive two sets of channel conditions for the general $K$-user GIC under which sum capacity is achieved by S-HK schemes. The first set of channel conditions are in equations (\ref{eqn:fin1})-(\ref{eqn:fin3}) of Theorem \ref{thm:covgen}. The second set of channel conditions are given by equations (\ref{eqn:achsum-rategenIC}) and  (\ref{eqn:jointfin1})-(\ref{eqn:jointfin4}) in Theorems \ref{thm:achGIC} and \ref{thm:convjoint}, respectively.

In the result in Theorem \ref{thm:covgen}, we consider the strategy of decoding interference from transmitters in $D(i)$ for each $i$ before decoding the desired message. For such decoding to be possible, conditions in (\ref{eqn:fin3}) need to be satisfied. For the optimality of treating the interference from transmitters in $I(i)$ as noise for each $i$, we get conditions (\ref{eqn:fin1})-(\ref{eqn:fin2}). These conditions correspond to the TIN optimality conditions for the modified GIC where all the links corresponding to decoded interference are removed.
\begin{theorem} \label{thm:covgen}
For the $K$-user GIC, the S-HK scheme defined by $I(i) \subseteq [K] \backslash \{i\}, \ \forall i \in [K]$ achieves sum capacity, if there exist $\rho_i \in (0,1)$, $\forall i \in [K]$, such that the following conditions are satisfied for all $i \in [K]$
\begin{align}
&\sum_{j: i \in I(j)} \left[ \frac{h_{ji}^2}{1+Q_j-\rho_j^2}\right]  \leq \frac{1}{P_i+\left(\frac{1+Q_i}{\rho_i}\right)^2}, \label{eqn:fin1} \\
& \underset{j \in I(i)}{\sum} \frac{h_{ij}^2(1+Q_j)^2}{\rho_j^2}  \leq  1-\rho_i^2, \label{eqn:fin2}\\
& \underset{j \in \mathcal{J}}{\prod} \left( 1+\frac{P_j}{1+Q_j} \right)  \leq \left(1+ \frac{\sum_{j \in \mathcal{J}}h_{ij}^2P_j}{1+P_i+Q_i} \right) \forall \mathcal{J} \subseteq D(i), \label{eqn:fin3}
\end{align}
where $Q_i=\underset{j \in I(i)}{\sum} h_{ij}^2P_j $,  $ D(i)=[K] \backslash \{i, I(i)\}.$ The sum capacity is
\begin{equation}
C_{sum} = \underset{i=1}{\overset{K}{\sum}} \frac{1}{2} \log \left[1+\frac{P_i}{1+Q_i} \right]. \label{eqn:sumcapacitygen}
\end{equation}
\end{theorem}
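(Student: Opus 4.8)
The plan is to prove the statement by establishing achievability and a matching converse separately. For achievability I would use independent Gaussian codebooks with $x_i\sim\mathcal N(0,P_i)$ and let transmitter $i$ signal at rate $R_i=\tfrac12\log(1+P_i/(1+Q_i))$. At receiver $i$, treat the signals $\{x_j:j\in I(i)\}$ as Gaussian noise (which they are, under Gaussian codebooks, of total variance $1+Q_i$) and \emph{simultaneously non-uniquely} decode the desired message together with the interfering messages $\{m_j:j\in D(i)\}$. In the error analysis of simultaneous non-unique decoding, only the error events in which $m_i$ itself is decoded incorrectly matter, so the binding constraints are indexed by subsets $S=\{i\}\cup\mathcal J$ with $\mathcal J\subseteq D(i)$, giving $R_i+\sum_{j\in\mathcal J}R_j\le \tfrac12\log\!\big(1+\tfrac{P_i+\sum_{j\in\mathcal J}h_{ij}^2P_j}{1+Q_i}\big)$. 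Substituting the chosen rates and exponentiating turns each such inequality \emph{exactly} into (\ref{eqn:fin3}); hence (\ref{eqn:fin3}) guarantees that every receiver reliably recovers its own message, and summing the $R_i$ yields the achievable sum rate (\ref{eqn:sumcapacitygen}).

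For the converse I would upper bound the sum rate of the \emph{original} channel and match it to (\ref{eqn:sumcapacitygen}). The first step is a genie reduction: supply receiver $i$ with the interfering codewords $\{x_j^n:j\in D(i)\}$ that it is meant to cancel. By Fano, $nR_i\le I(m_i;y_i^n)+n\epsilon_n\le I(m_i;y_i^n,\{x_j^n\}_{j\in D(i)})+n\epsilon_n$. Because the messages are independent, conditioning on $\{x_j^n\}_{j\in D(i)}$ lets receiver $i$ subtract those terms and does not otherwise change the law of the residual output $\tilde y_i^n=x_i^n+\sum_{j\in I(i)}h_{ij}x_j^n+z_i^n$, so $I(m_i;y_i^n,\{x_j^n\}_{j\in D(i)})=I(m_i;\tilde y_i^n)\le I(x_i^n;\tilde y_i^n)$. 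Summing over $i$ bounds $\sum_i R_i$ by $\tfrac1n\sum_i I(x_i^n;\tilde y_i^n)$, which is precisely the quantity governing TIN on the \emph{modified} $K$-user GIC in which every decoded link ($j\in D(i)$) has been deleted and only the links $\{h_{ij}:j\in I(i)\}$ remain.

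It then remains to prove TIN is sum-optimal for this modified GIC, i.e. $\tfrac1n\sum_i I(x_i^n;\tilde y_i^n)\le\sum_i\tfrac12\log(1+P_i/(1+Q_i))$ for all admissible inputs. I would do this with the ``smart-genie''/worst-case-noise argument of Annapureddy--Veeravalli and Shang--Kramer--Chen adapted to the modified channel: give each receiver a further genie signal $s_i=x_i+w_i$, with $w_i$ Gaussian and its correlation with $z_i$ tuned through $\rho_i$, bound $I(x_i^n;\tilde y_i^n)\le I(x_i^n;\tilde y_i^n,s_i^n)$, and invoke ``Gaussian input maximizes the difference of entropies'' together with the worst-additive-noise lemma so that the cross-receiver entropy terms telescope. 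The role of the two hypotheses is then exactly as in the TIN theorems restricted to the modified channel: (\ref{eqn:fin2}) is the covariance-validity (PSD) condition making the postulated worst-case noise at receiver $i$ legitimate, while (\ref{eqn:fin1}) is the useful-genie condition that renders $s_i$ tight and forces each single-receiver bound down to $\tfrac12\log(1+P_i/(1+Q_i))$; here the sum over $\{j:i\in I(j)\}$ and the coefficients $h_{ji}$ appear precisely because, in the modified channel, transmitter $i$ is seen as noise only at the receivers $j$ with $i\in I(j)$. The main obstacle is this last step: showing that a single choice of parameters $\rho_i\in(0,1)$ simultaneously meets the PSD/worst-case-noise requirements at every receiver and collapses the genie-aided bound to the TIN value, i.e. that (\ref{eqn:fin1})--(\ref{eqn:fin2}) are exactly the conditions under which the entropy manipulations are tight for the modified GIC. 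By comparison, the achievability reduction to (\ref{eqn:fin3}) and the genie deletion of the decoded links are routine.
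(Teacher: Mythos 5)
Your proposal is correct, and it rests on the same two pillars as the paper's proof (a MAC-type achievability argument yielding (\ref{eqn:fin3}), and a Shang--Kramer--Chen style genie-aided converse yielding (\ref{eqn:fin1})--(\ref{eqn:fin2})), but it organizes both pillars differently. On achievability, the paper uses successive decoding: receiver $i$ first jointly decodes the messages in $D(i)$ while treating its own signal as noise (hence the $1+P_i+Q_i$ denominator in (\ref{eqn:fin3})), cancels them, and then decodes $x_i$; you use simultaneous non-unique decoding with binding constraints indexed by $\{i\}\cup\mathcal{J}$, $\mathcal{J}\subseteq D(i)$. Your algebra is right: after substituting $R_j=\tfrac12\log\bigl(1+P_j/(1+Q_j)\bigr)$ the factor $1+P_i/(1+Q_i)$ cancels and your constraints collapse to exactly (\ref{eqn:fin3}), so the two schemes impose identical conditions at this rate point. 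On the converse, the paper hands receiver $i$ the single combined genie $\{\bx_i^n+\bn_i^n,\ \bx_j^n,\ j\in D(i)\}$ and re-derives the entire EPI/derivative/smart-genie argument of \cite{shang2008} with that genie (Appendix A); you instead factor the argument into (a) conditioning on the decoded codewords, which, by independence of the messages, reduces the problem to TIN-optimality for the modified GIC in which the decoded links are deleted, and (b) invoking the known TIN theorem for that modified channel. These are the same genie in two installments, and the paper itself remarks that (\ref{eqn:fin1})--(\ref{eqn:fin2}) are precisely the TIN conditions of the modified GIC; your modularization buys a shorter proof (the cited TIN theorems are stated for arbitrary coefficient matrices, so they do apply verbatim to the partially connected modified channel), while the paper's self-contained derivation makes explicit that the smart-genie equality $\rho_i\sigma_i=1+Q_i$ and the Gaussian-optimality derivative conditions survive the extra conditioning. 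One minor imprecision in your wording: (\ref{eqn:fin1}) is the useful-genie (Gaussian-optimality) condition \emph{after} the smart-genie equality $\sigma_i=(1+Q_i)/\rho_i$ has been substituted in, rather than itself being the condition that makes the genie give no rate gain; this does not affect the correctness of your argument.
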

\begin{proof}
(Achievability) Suppose that each receiver $i$ decodes the interference from transmitters $D(i)$, and then decodes the information from $i^{th}$ transmitter, while treating interference from other transmitters $I(i)$ as noise. The multiple access channel (MAC) constraints for decoding the interference at each receiver $i$ are 
\begin{align}
\sum_{j \in \mathcal{J}} R_j & \leq \frac{1}{2} \log \left(1+ \frac{\sum_{j \in \mathcal{J}}h_{ij}^2P_j}{1+P_i+Q_i}\right), \forall \mathcal{J} \subseteq D(i). \label{eqn:condition2}
\end{align}
The sum capacity in (\ref{eqn:sumcapacitygen}) is achieved if choosing
\begin{equation}
R_i = \frac{1}{2} \log\left(1+\frac{P_i}{1+Q_i} \right), \ \forall i \in [K] \label{eqn:condition1}
\end{equation}
satisfies (\ref{eqn:condition2}), thereby resulting in conditions in (\ref{eqn:fin3}).\\

(Converse) For each receiver $i \in [K]$, use the genie signal $\bs_i^n= \{\bx_i^n+\bn_i^n, \  \bx_j^n, \ j\in D(i)\}$ where $\bn_i^n \sim \mathcal{N}(\mathbf{0}, \sigma_i^2 \mathbf{I})$ and $E[n_i z_i]=\rho_i \sigma_i$, $0 < \rho_i < 1$. Here, for each $i$, we provide signals $\bx_j^n, \forall j \in D(i)$ in addition to the genie signal $\bx_i^n+\bn_i^n$ that is used in \cite{shang2008}. Under (\ref{eqn:fin1}) and (\ref{eqn:fin2}), we get the required upper bound following steps similar to the proof in \cite[Theorem 9]{shang2008}, but with the above genie signals. The details are provided in Appendix \ref{app:convthm1}.

Combining the conditions (\ref{eqn:fin1}) and (\ref{eqn:fin2}) for the converse with the conditions (\ref{eqn:fin3}) for achievability, we get the required result.
\end{proof}

Now, we derive the second set of channel conditions under which S-HK schemes are optimal. We do this in two steps. First, we derive general bounds on the achievable sum rate of S-HK schemes in Theorem \ref{thm:achGIC}. Unlike Theorem \ref{thm:covgen}, where the interference is decoded and cancelled before decoding the desired signal, here we determine more general bounds on the achievable sum rate for an S-HK scheme. Then, we show in Theorem \ref{thm:convjoint} that one of the sum rate upperbounds in Theorem \ref{thm:achGIC} is also a sum capacity bound under some channel conditions. Therefore, the channel conditions under which we get a sum capacity result will comprise of (i) the conditions (\ref{eqn:jointfin1})-(\ref{eqn:jointfin4}) required to prove the sum capacity upper bound in Theorem \ref{thm:convjoint}, and (ii) the conditions (\ref{eqn:achsum-rategenIC}) under which this sum capacity upperbound is achievable in Theorem \ref{thm:achGIC}. 
\begin{theorem}\label{thm:achGIC}
For the $K$-user GIC, the S-HK scheme defined by $\{I(i)\}$ achieves sum rates $S$ satisfying the following conditions for each $l \in [K]$. 
\begin{IEEEeqnarray}{lcr}
l.S  \leq \frac{1}{2} \sum_{i \in [K]} \log \left(1+\frac{\sum_{j \in \mathcal{J}_i} h_{ij}^2P_j}{1+ Q_i} \right)
\label{eqn:achsum-rategenIC}
\end{IEEEeqnarray}
for each choice of ${\cal J}_i \subseteq [K]\backslash I(i)$ such that  $\underset{i \in {[K]}}{\bigcup} \mathcal{J}_i= {\cal S}_l$. Here ${\cal S}_l$ is a multiset containing $l$ copies of each element in $[K]$ and is denoted ${\cal S}_l = \{ (a,l): a \in [K] \}$, and $Q_i=\underset{j \in I(i)}{\sum} h_{ij}^2P_j$.
\end{theorem}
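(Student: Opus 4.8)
The plan is to show that the rate region achieved by the S-HK scheme, projected onto the sum rate $S=\sum_{k\in[K]}R_k$, is exactly the set of $S$ obeying (\ref{eqn:achsum-rategenIC}). First I would fix the decoding rule: at receiver $i$ the messages of $\{i\}\cup D(i)=[K]\setminus I(i)$ are decoded jointly while the signals of $I(i)$ are treated as Gaussian noise of power $Q_i$. With Gaussian codebooks this is a Gaussian MAC, so receiver $i$ imposes $\sum_{j\in\mathcal{T}}R_j\le f_i(\mathcal{T})\triangleq\frac{1}{2}\log\!\left(1+\frac{\sum_{j\in\mathcal{T}}h_{ij}^2P_j}{1+Q_i}\right)$ for every $\mathcal{T}\subseteq[K]\setminus I(i)$ (with $h_{ii}=1$). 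Since message $k$ is decoded at receiver $k$ and at every $i$ with $k\in D(i)$, the achievable region $\mathcal{R}$ is the intersection of these $K$ MAC polymatroids together with $R_k\ge 0$; each $f_i$ is submodular, being a concave function of the modular set function $\mathcal{T}\mapsto\sum_{j\in\mathcal{T}}h_{ij}^2P_j$, and $\mathcal{R}$ is a polytope that is down-closed in the sum coordinate.

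Because $\mathcal{R}$ is convex and contains the origin, every $S$ below $S^\star\triangleq\max_{R\in\mathcal{R}}\sum_k R_k$ is an achievable sum rate, so it suffices to prove the max-min identity $S^\star=\min_{l,\{\mathcal{J}_i\}}\frac{1}{l}\sum_{i\in[K]}f_i(\mathcal{J}_i)$ over all admissible $(l,\{\mathcal{J}_i\})$. One inequality is immediate and coincides with the converse for this region: for fixed admissible $(l,\{\mathcal{J}_i\})$, adding the $K$ inequalities $\sum_{j\in\mathcal{J}_i}R_j\le f_i(\mathcal{J}_i)$ and invoking the multiset condition $\bigcup_i\mathcal{J}_i=\mathcal{S}_l$, which makes each $R_k$ occur exactly $l$ times on the left, yields $lS\le\sum_i f_i(\mathcal{J}_i)$, hence $S^\star\le\frac{1}{l}\sum_i f_i(\mathcal{J}_i)$.

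The substantive direction is that these bounds are tight. I would cast $S^\star=\max\{\sum_k R_k:R\in\mathcal{R}\}$ as a linear program and pass to its dual, a fractional covering LP that chooses weights $y_{i,\mathcal{T}}\ge 0$ with $\sum_{i,\mathcal{T}:k\in\mathcal{T}}y_{i,\mathcal{T}}\ge 1$ for every $k$ and minimizes $\sum_{i,\mathcal{T}}y_{i,\mathcal{T}}f_i(\mathcal{T})$; strong duality equates its value with $S^\star$. The claim then reduces to showing that this dual optimum is attained by a solution of the restricted form underlying (\ref{eqn:achsum-rategenIC}): a \emph{single} subset $\mathcal{J}_i$ per receiver, all nonzero weights equal to a common $1/l$, and an exact $l$-cover $\bigcup_i\mathcal{J}_i=\mathcal{S}_l$. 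For $K=2$ this is Edmonds' two-polymatroid intersection formula $S^\star=\min_A[f_1(A)+f_2(A^c)]$ (the case $l=1$), and the parameter $l$ is the $K$-user generalization.

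The hard part is this structural reduction, above all the restriction to one subset per receiver. I would start with submodular uncrossing: if an optimal dual places weight on incomparable subsets $A,B$ at the same receiver $i$, replacing $(A,B)$ by $(A\cup B,A\cap B)$ preserves every coverage count and, by submodularity of $f_i$, does not raise the objective, so each receiver's support may be taken to be a chain. Collapsing a chain further to a single set is where the concave-of-modular form enters: since $\phi_i(0)=0$ with $\phi_i(x)=\frac12\log(1+\frac{x}{1+Q_i})$ concave, $f_i$ is subadditive, $f_i(B)\le f_i(A)+f_i(B\setminus A)$ for $A\subseteq B$, so consolidating a receiver's coverage into one larger subset is never more costly, provided the coverage lost on the inner sets is recovered at other receivers---exactly what the multiplicity $l$ permits. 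Turning this trade-off into an exact single-subset cover $\bigcup_i\mathcal{J}_i=\mathcal{S}_l$ is the crux; I would settle it by a total-dual-integrality argument for the covering LP, or constructively by Fourier--Motzkin eliminating $R_1,\dots,R_K$ and verifying that submodularity leaves only sum-rate facets of the stated form. With this in hand (details deferred to an appendix), the max-min identity follows and every $S$ satisfying (\ref{eqn:achsum-rategenIC}) is achievable.
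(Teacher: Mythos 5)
Your first half coincides with the paper's proof: receiver $i$ jointly decodes the users in $[K]\setminus I(i)$ over a Gaussian MAC with effective noise power $1+Q_i$, the achievable region is the intersection of the $K$ MAC polytopes, and your validity direction---adding one MAC inequality per receiver and using the multiset condition $\bigcup_i \mathcal{J}_i = \mathcal{S}_l$ so that each $R_k$ appears exactly $l$ times on the left---is precisely the aggregation that the paper's elimination performs. Where you part ways is on the substantive content, namely that these bounds exhaust the projection, so that every $S$ below their minimum is achievable. The paper disposes of this with a single sentence (``Using Fourier-Motzkin elimination, we get the sum rate bounds in (\ref{eqn:achsum-rategenIC})''), i.e., it appeals to Fourier--Motzkin elimination being an exact projection procedure and asserts that its output lies in the family (\ref{eqn:achsum-rategenIC}); you instead recast the question as strong duality for a fractional covering LP plus a structural reduction of optimal dual solutions.

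The genuine gap is that your proposal defers exactly the decisive step, and neither of your suggested closures is secure as stated. The reduction to one subset per receiver, uniform weights $1/l$ with $l \le K$, and an exact $l$-cover \emph{is} the theorem; your uncrossing argument only reduces each receiver's support to a chain, and the chain-to-single-set collapse is asserted rather than proven---dropping the inner set of a chain lowers the coverage of its elements by one, and ``recovered at other receivers'' is precisely what needs an argument. The TDI fallback is also doubtful on its face: covering LPs arising from intersections of three or more polymatroids are not totally dual integral in general, so any proof of the uniform exact-cover structure must exploit the specific concave-of-modular form of the $f_i$, which you invoke only for subadditivity. Your other fallback---``constructively by Fourier--Motzkin eliminating $R_1,\dots,R_K$''---is not a second method but is verbatim the paper's proof; it is also the more promising closure, since every FME-generated inequality is a nonnegative integer combination of MAC inequalities and the constraints $R_k \ge 0$, hence of the form $cS \le \sum_{i,\mathcal{T}} m_{i,\mathcal{T}} f_i(\mathcal{T})$ with each user covered at least $c$ times, and over-coverage can be stripped using monotonicity of the MAC bounds because $\mathcal{J}_i$ ranges over all subsets of $[K]\setminus I(i)$; what then remains---eliminating same-receiver multiplicities so each bound is dominated by a member of (\ref{eqn:achsum-rategenIC})---is the same combinatorial fact you deferred. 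In short, you rigorously reproduce everything the paper writes down explicitly, and you correctly isolate where the real work lies, but the tightness direction is left open in your proposal (as, admittedly, it is left implicit behind the one-line invocation of Fourier--Motzkin elimination in the paper).
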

\begin{proof}
At each receiver $i$, users $[K] \backslash I(i)$ form a Gaussian MAC with noise variance $1+Q_i$. The achievable rates of each MAC at receiver $i \in [K]$ satisfy 
\begin{align*}
\sum_{j \in \mathcal{J}_i} R_j & \leq \frac{1}{2} \log \left(1+\frac{\sum_{j \in \mathcal{J}_i} h_{ij}^2P_j}{1+ Q_i} \right) \forall \mathcal{J}_i \subseteq [K] \backslash I(i). 
\end{align*}
Using Fourier-Motzkin elimination, we get the sum rate bounds in (\ref{eqn:achsum-rategenIC}). 
\end{proof}
The maximum sum rate achievable using an S-HK scheme is determined by the least upper bound for $S$ among the bounds in (\ref{eqn:achsum-rategenIC}).
As an example of a bound in the above theorem, consider $l = 1$, ${\cal J}_m = \{m, k\}$ for some $m, k \in [K]$, ${\cal J}_k = \phi$, and ${\cal J}_i = i$ for $i \in [K] \backslash \{m, k\}$. This gives us the bound on sum rate to be
\begin{IEEEeqnarray*}{lcr}
\frac{1}{2} \log \left[1+\frac{P_m+h_{mk}^2P_k}{1+Q_m} \right] + \underset{i \neq k,m}{\underset{i=1}{\overset{K}{\sum}}} \frac{1}{2} \log \left[1+\frac{P_i}{1+Q_i} \right]. 
\end{IEEEeqnarray*}
Now, if we can show that one of these inequalities in (\ref{eqn:achsum-rategenIC}) is also an upper bound on the sum capacity under some conditions, then we get a sum capacity result.  In the following theorem, we show that the sum rate bound expression in the example above is a sum capacity upper bound under conditions (\ref{eqn:jointfin1})-(\ref{eqn:jointfin4}) (for the choice $G(i) = I(i)$ in the following theorem).
\begin{theorem} \label{thm:convjoint}
Let $G(i) \subseteq [K]\backslash \{i\}, \ \forall i \in [K]$ and let there be some $m, k \in [K]$ such that $m, k \notin G(i)$, $\forall i \in [K] \backslash \{k\}$. For the $K$-user GIC, if there exist $\rho_i \in (0,1)$, $\forall i \in  [K]\backslash \{m\}$ such that the following conditions are satisfied 
\begin{align}
&\frac{1}{P_r+\left(\frac{1+Q_r}{\rho_r}\right)^2}  \geq  \sum_{\underset{i \neq \{m,k\}}{i: r \in G(i)}} \left[ \frac{h_{ir}^2}{1+Q_i-\rho_i^2}\right]  \nonumber\\
& ~ + \delta_r \left[\frac{h_{mr}^2}{1+Q_m-\rho_k^2} \right] \forall \ r \in [K]\backslash \{m,k\}, \label{eqn:jointfin1} \\
&\rho_k h_{mk} = 1+ Q_m \label{eqn:jointfin2}\\
 &\underset{j \in G(i)}{\sum} \frac{h_{ij}^2(1+Q_j)^2}{\rho_j^2} \leq   1-\rho_i^2,  \forall i \in [K] \backslash \{m,k\}, \label{eqn:jointfin3}\\
   & \underset{j \in G(m)}{\sum} \frac{h_{mj}^2(1+Q_j)^2}{\rho_j^2} \leq 1-\rho_k^2, \label{eqn:jointfin4}
\end{align}
where  $\delta_r = 1$ if $r \in G(m)$ and $\delta_r = 0$ otherwise, and $Q_i=\underset{j \in G(i)}{\sum} h_{ij}^2P_j $, then the sum capacity $C_{sum}$ is upper bounded by 
\begin{IEEEeqnarray*}{lcr}
& \frac{1}{2} \log \left[1+\frac{(P_m+h_{mk}^2P_k)}{1+Q_m} \right]  + \underset{i \neq k,m}{\underset{i=1}{\overset{K}{\sum}}} \frac{1}{2} \log \left[1+\frac{P_i}{1+Q_i} \right]. 
\end{IEEEeqnarray*}
\end{theorem}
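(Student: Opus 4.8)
The plan is to prove this upper bound by a genie-aided converse that mirrors the converse of Theorem~\ref{thm:covgen} (and hence \cite[Thm.~9]{shang2008}), with the essential new ingredient that users $m$ and $k$ are treated jointly at receiver~$m$. Since the target bound contains one combined logarithm for the pair $(m,k)$ and a single-user logarithm for every other index, I would split the total rate as $n\sum_i R_i=n(R_m+R_k)+\sum_{i\neq m,k}nR_i$ and bound each group by a mutual information fed with a tailored genie. For $i\neq m,k$ I would reuse the genie $\bs_i^n=\{\bx_i^n+\bn_i^n,\ \bx_j^n: j\in D(i)\}$ of Theorem~\ref{thm:covgen}, where $\bn_i^n$ has correlation coefficient $\rho_i$ with $\bz_i^n$ (conditional variance $1-\rho_i^2$) and variance $(1+Q_i)^2/\rho_i^2$ (the choice that makes the single-user term come out to the stated logarithm), and $D(i)=[K]\setminus\{i,G(i)\}$. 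For the pair $(m,k)$ I would bound $n(R_m+R_k)\le I(\bx_m^n,\bx_k^n;\by_m^n,\bs_m^n)$ using the combined genie $\bs_m^n=\{\bx_m^n+h_{mk}\bx_k^n+\bn_m^n,\ \bx_j^n: j\in D(m)\setminus\{k\}\}$, where $\bn_m^n$ has correlation coefficient $\rho_k$ with $\bz_m^n$ (this is why the theorem uses $\rho_k$ and has no $\rho_m$). The hypothesis that $m,k\notin G(i)$ for every $i\neq k$, together with the equality (\ref{eqn:jointfin2}), is what lets receiver~$m$ absorb user~$k$'s rate: it makes the genie-enhanced receiver~$m$ at least as informative about $W_k$ as $\by_k^n$, so that $\by_k^n$ need not appear and the pair is handled entirely at receiver~$m$ as a two-user MAC.

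The second step is to expand every mutual information by the chain rule into differences of differential entropies and sum over the groups. Following \cite{shang2008}, I would upper bound the positive terms $h(\by_i^n\mid\bs_i^n)$ and $h(\by_m^n\mid\bs_m^n)$ by their Gaussian maxima under the power and covariance constraints; the chosen genie variances make these collapse to $\tfrac12\log(1+P_i/(1+Q_i))$ for $i\neq m,k$ and to the combined term $\tfrac12\log\!\big(1+(P_m+h_{mk}^2P_k)/(1+Q_m)\big)$ for the pair. Condition (\ref{eqn:jointfin2}), $\rho_k h_{mk}=1+Q_m$, is exactly the alignment needed here: it matches the effective noise seen by the genie $\bx_m^n+h_{mk}\bx_k^n+\bn_m^n$ to the treated-as-noise level $1+Q_m$ at receiver~$m$, so the two-user sum rate collapses to the stated combined logarithm. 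The ``useful-genie'' conditions (\ref{eqn:jointfin3}) and (\ref{eqn:jointfin4}) guarantee that the relevant conditional variances are nonnegative and that the negative entropy terms $h(\bs_i^n\mid\bx_i^n)$ dominate the matching terms they must cancel, the latter being written with $1-\rho_k^2$ precisely because receiver~$m$ runs on $\rho_k$.

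The crux is the cross-term cancellation once the groups are added. As in the Shang--Kramer--Chen argument, after replacing each differential entropy by its Gaussian value the leftover negative terms must sum to something nonpositive, which reduces to requiring that each transmitter $r\neq m,k$ enters with the correct sign; this is precisely the content of (\ref{eqn:jointfin1}). There the budget $1/\big(P_r+((1+Q_r)/\rho_r)^2\big)$ supplied by the genie at receiver~$r$ must cover the sum of interference penalties $h_{ir}^2/(1+Q_i-\rho_i^2)$ over all receivers $i$ that treat $r$ as noise, plus the extra term $\delta_r\,h_{mr}^2/(1+Q_m-\rho_k^2)$ which arises exactly when $r\in G(m)$, i.e. when the combined receiver~$m$ (on $\rho_k$) also treats $r$ as noise. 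I expect the main obstacle to be this bookkeeping: verifying that the two-user MAC genie at receiver~$m$ generates exactly the $\delta_r$ contribution in (\ref{eqn:jointfin1}) and the $\rho_k$ form of (\ref{eqn:jointfin4}), and that under these conditions the global cancellation leaves only the claimed $K-1$ logarithms. The final single-letterization from the $n$-letter entropies to the per-symbol Gaussian expression should then follow by the standard concavity step once the sign conditions are secured.
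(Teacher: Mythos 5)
Your proposal contains a genuine gap at its very first and most important step: the claim that $n(R_m+R_k)\le I(\bx_m^n,\bx_k^n;\by_m^n,\bs_m^n)+n\epsilon_n$ for a combined genie at receiver $m$, with $\by_k^n$ never appearing in the converse. Fano's inequality only gives $nR_k \le I(\bx_k^n;\by_k^n)+n\epsilon_n$; to move user $k$'s rate over to receiver $m$ you need either a decoder for $W_k$ operating on $(\by_m^n,\bs_m^n)$ or an information inequality of the form $I(\bx_k^n;\by_k^n)\le I(\bx_k^n;\by_m^n,\bs_m^n\mid\bx_m^n)$ valid for the distributions induced by \emph{arbitrary} reliable codes. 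You assert this "absorption" follows from $m,k\notin G(i)$ and (\ref{eqn:jointfin2}), but (\ref{eqn:jointfin2}) is a smart-genie condition (it is exactly what makes a certain conditional mutual information vanish \emph{for Gaussian inputs}); it is not a degradedness or strong-interference condition that holds for all input distributions. Concretely, receiver $k$'s actual output contains interference from the users in $G(m)$ (with coefficients $h_{kj}$), while your genie-enhanced receiver $m$ has the codewords only of users in $D(m)\setminus\{k\}$ and treats the users in $G(m)$ as noise (with different coefficients $h_{mj}$); there is no stochastic map from $(\by_m^n,\bs_m^n)$ to something at least as good as $\by_k^n$, so neither a simulation nor a Sato-type argument goes through. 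Everything downstream (the EPI/KKT bookkeeping producing (\ref{eqn:jointfin1}), (\ref{eqn:jointfin3}), (\ref{eqn:jointfin4})) rests on this unproven step.

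The paper's proof is structured precisely to avoid ever having to make that claim, using two ingredients your proposal lacks. First, it upper bounds the sum capacity by that of a \emph{modified channel in which receiver $k$ sees no interference}, so that $\by_k^n=\bx_k^n+\bz_k^n$ and Fano can be applied at receiver $k$ itself. Second, it keeps three separate mutual-information terms, $I(\bx_m^n;\by_m^n\mid\bs_m^n)+I(\bx_k^n;\by_k^n,\bs_k^n)+\sum_{i\ne m,k}I(\bx_i^n;\by_i^n,\bs_i^n)$, where the genie at receiver $m$ is pure side information $\bs_m^n=\{\bx_j^n:\,j\in\bar G(m)\setminus k\}$ (no noisy self-observation and no $\rho_m$), and the genie at receiver $k$ is $\bs_k^n=h_{mk}\bx_k^n+\sum_{j\in G(m)}h_{mj}\bx_j^n+\bn_k^n$ with $\sigma_k=1$, i.e.\ a statistical replica of the residual signal at receiver $m$. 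This yields the identity $h(\bs_k^n)=h(\by_m^n\mid\bs_m^n,\bx_m^n)$, which cancels the negative term from receiver $m$, and, once Gaussian inputs are shown optimal (EPI plus KKT, giving (\ref{eqn:jointfin1}), (\ref{eqn:jointfin3}), (\ref{eqn:jointfin4})) and the smart-genie conditions $\rho_i\sigma_i=1+Q_i$ and (\ref{eqn:jointfin2}) are imposed, gives $I(x_{kG};s_{kG})=I(x_{kG};y_{mG}\mid s_{mG},x_{mG})$, so that receiver $m$'s and receiver $k$'s terms merge by the chain rule into the single combined logarithm $\tfrac12\log\bigl(1+(P_m+h_{mk}^2P_k)/(1+Q_m)\bigr)$. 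In short, the pairing of $m$ and $k$ is achieved by an entropy-matching genie at receiver $k$ in a one-sided modified channel, not by a MAC-style Fano bound at receiver $m$; without that construction (or a proof of your absorption claim, which the stated hypotheses do not support), your argument does not establish the theorem.
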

\begin{proof}
The detailed proof is provided in Appendix \ref{app:proofthm3}. Here, we present a brief outline and highlight some aspects of the proof.
First, we consider a modified channel with no interference at receiver $k$. The sum capacity of the original channel is upper bounded by the sum capacity of the modified channel. 
Then, we derive a genie-aided upper bound for the modified channel using the 
genie signals $\bs_i^n$ at receiver $i$ for each $i \in [K]$ as follows:
\begin{align*}
\bs_i^n & = \{\bx_i^n+\bn_i^n, \bx_j^n, j \in \bar{G}(i)\} , \ \forall i\in [K] \backslash \{m,k\} \\
\bs_m^n & =\{\bx_j^n,\  j \in \bar{G}(m) \backslash k \}\\
\bs_k^n &=h_{mk}\bx_k^n+\underset{j \in G(m)}{\sum} h_{mj} \bx_j^n+ \bn_k^n
\end{align*}
where $\bar{G}(i)=[K] \backslash \{i,\{G(i)\}\}$, $n_i \sim \mathcal{N}(0, \sigma_i^2)$, $E[n_i z_i]=\rho_i \sigma_i$, and  $0 < \rho_i < 1$, for each $i \in [K] \backslash \{m\}$, and $\sigma_k=1$.
This choice of genie is then shown to be useful and smart under conditions (\ref{eqn:jointfin1})-(\ref{eqn:jointfin4}) to obtain the upper bound in the theorem statement. 

Here are some remarks about this proof.
\begin{itemize}
\item The genie signal is different from Theorem 1 for receivers $m$ and $k$. The genie at receiver $k$ has the interference component at receiver $m$ from transmitter $k$ and the other transmitters that are treated as noise. This choice ensures that $h(\bs_k^n)=h(\by_m^n |\bs_m^n, \bx_m^n )$ and helps in cancelling one negative term in the sum capacity upper bound.
\item The assumption that  $m, k \notin G(i)$, $\forall i \in [K] \backslash \{k\}$ is used as part of the argument that the genie is useful.
\item The first upper bounding step is with a modified channel with no interference at receiver $k$. It is interesting to note that the sum capacity result for the 2-user GIC under mixed interference in \cite{motkha09} also uses the one-sided GIC as the first step, and we recover these results as special cases of our result.
\item This proof also generalizes the proof for the many-to-one GIC in \cite[Theorem 4]{GnaChaBha17} to the general $K$ user GIC.
\end{itemize}

\end{proof}

Some examples of the conditions obtained from Theorems 1-3 are presented in Appendix \ref{app:examples}.

\subsection{Relation with exisiting sum capacity results}
Applying Theorems \ref{thm:covgen}-\ref{thm:convjoint} to the special case of 2-user channels, i.e., $K = 2$, we recover all known sum capacity results for the 2-user GIC in \cite{car75,Sat81,annvee09,shakrache09,motkha09}. The first set of channel conditions in our paper gives the noisy interference result in \cite{annvee09,shakrache09,motkha09}, the very strong interference result in \cite{car75}, and part of the mixed interference result in \cite[Thm. 10]{motkha09} as special cases. The second set of channel conditions in our paper gives the remaining part of the mixed interference result in \cite[Thm. 10]{motkha09} and the strong interference result in \cite{Sat81}. The actual list of channel conditions and the corresponding sum capacity are in Appendix \ref{app:2user}.

Applying Theorems \ref{thm:covgen}-\ref{thm:convjoint} to the special cases of partially connected Gaussian ICs, we can recover the sum capacity results in \cite{GnaChaBha17,PraBhaCho16,liuerk11}. We can also get some new results for the $K$-user cyclic and cascade GICs. The results corresponding to the two channel conditions for the cyclic, cascade and many-to-one GICs are presented in Appendix \ref{app:partialres}.

\section{Numerical Results}
In this section, we numerically find the probability that the first set of channel conditions under which S-HK schemes achieve sum capacity, i.e, equations (\ref{eqn:fin1})-(\ref{eqn:fin3}), are satisfied for three different random wireless network topologies. A similar study for the second set of channel conditions is currently under progress.

\textit{Topology 1}: In this topology, all $K$ transmitters are placed randomly and uniformly in a circular cell of radius 1 km. We assume that each transmitter has a nominal coverage radius of $r_1$ m. For each transmitter, we then place its receiver randomly and uniformly in its coverage area. This topology is illustrated in figure \ref{uniform_setup} for $K=5$.
\begin{figure}
\centering
\begin{tikzpicture}[auto,thick,scale=0.3]
\draw[color=red!80] (0,0) circle [radius=10cm];
\draw[color=blue,dashed] (5,-2) circle [radius=3.5cm];
\draw[color=blue,dashed] (-5,4) circle [radius=3.5cm];
\draw[color=blue,dashed] (-4,-6) circle [radius=3.5cm];
\draw[color=blue,dashed] (7,7) circle [radius=3.5cm];
\draw[color=blue,dashed] (2,2) circle [radius=3.5cm];

\draw 
      node at (5,-2) [triangle, blue, name=t1] {}
      node at (-5,4) [triangle,blue, name=t2] {}
      node at (-4,-6) [triangle,blue, name=t3] {}
      node at (7,7) [triangle,blue, name=t4] {}
      node at (2,2) [triangle,blue, name=t5] {};
\draw 
      node at (7.6,-2.7) [cross, red, name=r1] {}
      node at (-6,2) [cross, red, name=r2] {}
      node at (-3,-5) [cross, red, name=r3] {}
      node at (5,5) [cross, red, name=r4] {}
      node at (3.5,2.5) [cross, red, name=r5] {};
      
\draw 
      node[thick] at (4,-2.5)  {\small $T_1$}
      node[thick] at (-4.5,5) {\small $T_2$}
      node[thick] at (-3,-6.5) {\small $T_3$}
      node[thick] at (7.9,7.5) {\small $T_4$}
      node[thick] at (1,1.5) {\small $T_5$};
\draw 
      node[thick] at (7.5,-1.9) {\small $R_1$}
      node[thick] at (-6,3) {\small $R_2$}
      node[thick] at (-2.5,-4) {\small $R_3$}
      node[thick] at (4.9,5.9) {\small $R_4$}
      node[thick] at (4.5,2.5){\small $R_5$};

\draw [<->] (0,0) -- (-10,0);
\draw [<->] (-4,-6) -- (-7.5,-6);

\draw 
      node[thick] at (-5,-0.5)  { $1$ Km}
      node[thick] at (-5.5,-6.5) { $r_1$ };
\end{tikzpicture}
  \caption{Topology 1 setup where triangles are transmitters and crosses are receivers}
    \label{uniform_setup}
\end{figure}
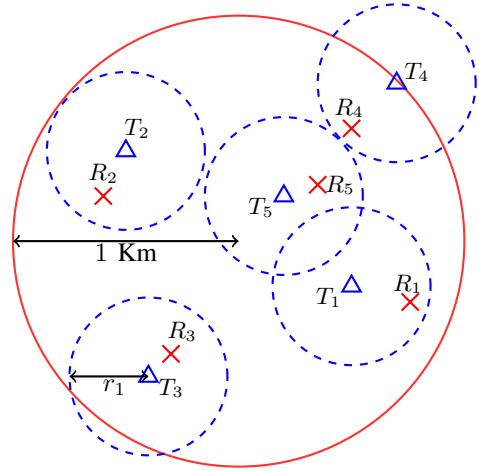

\textit{Topology 2}: In this topology, the first transmitter is placed at the center of a circle of radius $r_2$ m and all the other transmitters are placed equally spaced on the perimeter of this circle. The nominal coverage radius of first transmitter is $3 r_2$ m and nominal coverage radius of all other transmitters are $r_2$ m. For each transmitter, we place its receiver randomly and uniformly in its coverage area. This topology for $K=4$ is illustrated in figure \ref{onetomany_setup}. In topology 2, the first transmitter has a longer range and, therefore, there is higher probability that its signal at other receivers is strong enough to decode.\\
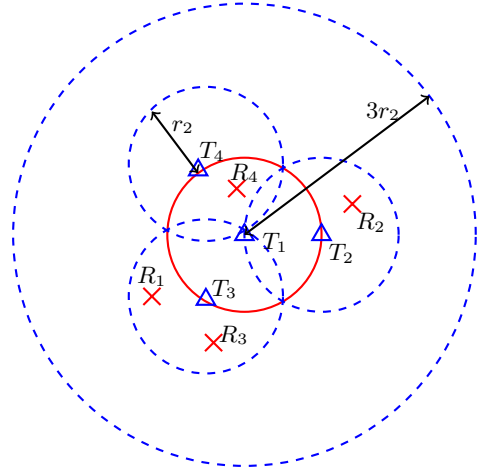
\begin{figure}
    \centering
\begin{tikzpicture}[auto,thick,scale=0.205]
\draw[color=blue,dashed] (-0,0) circle [radius=15cm];
\draw[color=red] (0,0) circle [radius=5cm];
\draw[color=blue,dashed] (5,0) circle [radius=5cm];
\draw[color=blue,dashed] (-2.5,4.6) circle [radius=5cm];
\draw[color=blue,dashed] (-2.5,-4) circle [radius=5cm];

\draw 
      node at (0,0) [triangle, blue, name=t1] {}
      node at (5,0) [triangle,blue, name=t2] {}
      node at (-2.5,-4.2) [triangle,blue, name=t3] {}
      node at (-3,4.2) [triangle,blue, name=t4] {};
\draw 
      node at (-6,-4) [cross, red, name=r1] {}
      node at (7,2) [cross, red, name=r2] {}
      node at (-2,-7) [cross, red, name=r3] {}
      node at (-0.5,3) [cross, red, name=r4] {};
\draw 
      node[thick] at (2,-0.5)  {\small $T_1$}
      node[thick] at (6.3,-1.2) {\small $T_2$}
      node[thick] at (-1.3,-3.5) {\small $T_3$}
      node[thick] at (-2.1,5.4) {\small $T_4$};
\draw 
      node[thick] at (-6,-2.7) {\small $R_1$}
      node[thick] at (8.2,0.8) {\small $R_2$}
      node[thick] at (-0.7,-6.5) {\small $R_3$}
      node[thick] at (-0,4) {\small $R_4$};

\draw [<->] (0,0) -- (12,9);
\draw [<->] (-3,4) -- (-6,8);

\draw 
      node[thick] at (9,8)  {\small $3r_2$}
      node[thick] at (-4,7) {\small $r_2$ };
\end{tikzpicture}
    \caption{Topology 2 setup where triangles are transmitters and crosses are receivers}
    \label{onetomany_setup}
\end{figure}
\textit{Topology 3}: In this topology, all transmitters are placed equidistantly along a line, with transmitter to transmitter distance $r_3$ m. For each transmitter, we place its corresponding receiver randomly and uniformly along the same line towards its right within $r_3$ m. We assume that the nominal coverage radius of each transmitter is $r_3$ m. This topology for $K=4$ is illustrated in figure \ref{cascade_setup}. In topology 3,  each receiver usually observes strong interference only from its adjacent transmitter.\\
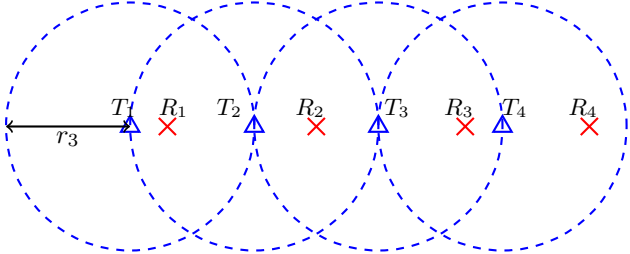
\begin{figure}
\begin{tikzpicture}[auto,thick,scale=0.165]

\draw[color=blue,dashed] (5,0) circle [radius=10cm];
\draw[color=blue,dashed] (15,0) circle [radius=10cm];
\draw[color=blue,dashed] (25,0) circle [radius=10cm];
\draw[color=blue,dashed] (35,0) circle [radius=10cm];

\draw 
      node at (5,0) [triangle, blue, name=t1] {}
      node at (15,0) [triangle,blue, name=t2] {}
      node at (25,0) [triangle,blue, name=t3] {}
      node at (35,0) [triangle,blue, name=t4] {};
\draw 
      node at (8,0) [cross, red, name=r1] {}
      node at (20,0) [cross, red, name=r2] {}
      node at (32,0) [cross, red, name=r3] {}
      node at (42,0) [cross, red, name=r4] {};
\draw 
      node[thick] at (4.5,1.5)  {\small $T_1$}
      node[thick] at (13,1.5) {\small $T_2$}
      node[thick] at (26.5,1.5) {\small $T_3$}
      node[thick] at (36,1.5) {\small $T_4$};
\draw 
      node[thick] at (8.5,1.5) {\small $R_1$}
      node[thick] at (19.5,1.5) {\small $R_2$}
      node[thick] at (31.5,1.5) {\small $R_3$}
      node[thick] at (41.5,1.5) {\small $R_4$};

\draw [<->] (-5,0) -- (5,0);

\draw 
      node[thick] at (0,-1) { $r_3$ };
\end{tikzpicture}
    \caption{Topology 3 setup where triangles are transmitters and crosses are receivers}
    \label{cascade_setup}
\end{figure}
For channel fading, we use the Erceg model \cite{ercgre99} as done in \cite{gennadavejaf15}. We used an operating frequency of 1.9 GHz and the terrain category of hilly/light tree density for model parameters. The noise floor is taken as $-110$ dBm and transmit power at each transmitter is chosen such that the expected value of the SNR at the boundary of their nominal coverage area is 0 dB.

For generating the plots, we consider 1000 realizations of the channel. With topology 1, for every realization we randomly place $K$ transmitters inside 1 km circular cell and also randomly place each receiver in its corresponding transmitters coverage area. With topology 2 and topology 3, first we fix the transmitters locations and for every realization we randomly place each receiver in its corresponding transmitter's coverage area.\\

In Figures \ref{fig:topology1TINDI}, \ref{fig:topology2TINDI}, \ref{fig:topology3TINDI}, we plot the probability that the conditions (\ref{eqn:fin1})-(\ref{eqn:fin3}) are satisfied for (i) TIN scheme, (ii) all S-HK schemes except the TIN scheme (denoted S-HK$\backslash$TIN) and (iii) all S-HK schemes. Figures \ref{fig:topology1TINDI}, \ref{fig:topology2TINDI}, \ref{fig:topology3TINDI} are plotted for topologies 1, 2, and 3, respectively. We observe that the probability that the conditions for optimality are satisfied is significant. In Fig. \ref{fig:topology1TINDI}, this probability increases with increasing nominal coverage radius $r_1$ as expected for $\text{S-HK}\backslash\text{TIN}$. 
\begin{figure}[ht]
\centering
\begin{tikzpicture}[scale=0.78]
\begin{axis}
[    grid=both,
    grid style={line width=.1pt, draw=gray!10},
    major grid style={line width=.2pt,draw=gray!20},xlabel=Coverage radius $r_1$ of each transmitter, ylabel=Probability,
cycle list={%
black,mark=o\\%
blue,mark=triangle\\%
red, mark=square\\%
}
]

\addplot coordinates {
(10,0.996)(
110,0.923)(
210,0.749)(
310,0.547)(
410	,0.372)(
510,0.265)(
610	,0.136)(
710	,0.093)(
810	,0.073)(
910	,0.038)(
1000 ,0.022)

  } ;
\addplot coordinates {
(10	,0.002)(
110	,0.025)(
210	,0.067)(
310	,0.13)(
410	,0.131)(
510	,0.148)(
610	,0.159)(
710	,0.14)(
810	,0.121)(
910	,0.115)(
1000,0.114)

  } ;
  
\addplot coordinates {
( 10,0.998)(
110	,0.948)(
210	,0.816)(
310,0.677)(
410,0.503)(
510	,0.413)(
610,0.295)(
710	,0.233)(
810	,0.194)(
910	,0.153)(
1000,0.136)
};
\addlegendentry{TIN}
\addlegendentry{S-HK$\backslash$TIN}
\addlegendentry{S-HK}
\end{axis}
\end{tikzpicture}
\caption{Success probability of conditions (\ref{eqn:fin1}), (\ref{eqn:fin2}), (\ref{eqn:fin3}) for TIN scheme , S-HK schemes excluding TIN and all S-HK schemes with topology 1, $K=3$ }
\label{fig:topology1TINDI}
\end{figure}

\begin{figure}[ht]
\centering
\begin{tikzpicture}[scale=0.78]
\begin{axis}
[    grid=both,
    grid style={line width=.1pt, draw=gray!10},
    major grid style={line width=.2pt,draw=gray!20},xlabel=Coverage radius $r_2$ of transmitters 2 and 3, ylabel=Probability,
legend style={at={(0.6,0.5)},anchor=west},
cycle list={%
black,mark=o\\%
blue,mark=triangle\\%
red, mark=square\\%
}
]

\addplot coordinates {
(
160	,0.004)(
210	,0.006)(
260	,0.011)(
310	,0.007)(
360	,0.007)(
410	,0.005)(
460	,0.002)

  } ;
\addplot coordinates {
(
160	,0.316)(
210	,0.31)(
260	,0.298)(
310	,0.278)(
360	,0.295)(
410	,0.288)(
460	,0.298)

  } ;
  
\addplot coordinates {
(
160	,0.32)(
210	,0.316)(
260	,0.309)(
310	,0.285)(
360	,0.302)(
410	,0.293)(
460	,0.3)
};
\addlegendentry{TIN}
\addlegendentry{S-HK$\backslash$TIN}
\addlegendentry{S-HK}
\end{axis}
\end{tikzpicture}
\caption{Success probability of conditions (\ref{eqn:fin1}), (\ref{eqn:fin2}), (\ref{eqn:fin3}) for TIN scheme , S-HK schemes excluding TIN and all S-HK schemes with topology 2, $K=3$ }
\label{fig:topology2TINDI}
\end{figure}

\begin{figure}[ht]
\centering
\begin{tikzpicture}[scale=0.78]
\begin{axis}
[ tick label style={/pgf/number format/fixed},
 grid=both,
    grid style={line width=.1pt, draw=gray!10},
    major grid style={line width=.2pt,draw=gray!20},xlabel=Coverage radius $r_3$ of each transmitter, ylabel=Probability,
legend style={at={(0.35,0.8)},anchor=east},
cycle list={%
black,mark=o\\%
blue,mark=triangle\\%
red, mark=square\\%
}
]

\addplot coordinates {
(10	,0.015)(
60	,0.009)(
110	,0.021)(
160	,0.044)(
210	,0.041)(
260	,0.076)(
310	,0.07)(
360	,0.071)(
410	,0.058)(
460	,0.078)

  } ;
\addplot coordinates {
(10,0)(
60	,0.018)(
110	,0.102)(
160	,0.101)(
210	,0.123)(
260	,0.117)(
310	,0.144)(
360	,0.138)(
410	,0.155)(
460	,0.149)

  } ;
  
\addplot coordinates {
(10	,0.015)(
60	,0.027)(
110	,0.123)(
160	,0.145)(
210	,0.164)(
260	,0.193)(
310	,0.214)(
360	,0.209)(
410	,0.213)(
460	,0.227)

};
\addlegendentry{TIN}
\addlegendentry{S-HK$\backslash$TIN}
\addlegendentry{S-HK}
\end{axis}
\end{tikzpicture}
\caption{Success probability of conditions (\ref{eqn:fin1}), (\ref{eqn:fin2}), (\ref{eqn:fin3}) for TIN scheme , S-HK schemes excluding TIN and all S-HK schemes with topology 3, $K=3$  }
\label{fig:topology3TINDI}
\end{figure}

In Figures \ref{fig:topology1TINDIspl}, \ref{fig:topology2TINDIspl}, \ref{fig:topology3TINDIspl}, we plot the probability that the conditions (\ref{eqn:fin1}), (\ref{eqn:fin2}), (\ref{eqn:fin3}) are satisfied for (i) all S-HK schemes except TIN scheme, (ii) all S-HK schemes where atmost 1 strong interference signal is decoded at each receiver except TIN (denoted S-HK1$\backslash$TIN). Figures \ref{fig:topology1TINDIspl}, \ref{fig:topology2TINDIspl}, \ref{fig:topology3TINDIspl} are plotted for topologies 1, 2, and 3, respectively. In topologies 2 and 3, decoding atmost one strong interference at each receiver is the most important class of S-HK schemes as expected since there is mainly one strongly interfering signal in these topologies.

\begin{figure}[ht]
\centering
\begin{tikzpicture}[scale=0.78]
\begin{axis}
[ tick label style={/pgf/number format/fixed},ymin=0, ymax=0.2,  
  grid=both,
    grid style={line width=.1pt, draw=gray!10},
    major grid style={line width=.2pt,draw=gray!20},xlabel=Coverage radius $r_1$ of each transmitter, ylabel=Probability,
cycle list={%
blue,mark=*\\%
black,mark=triangle\\%
red, mark=square\\%
}
]

\addplot coordinates {
(410,0.131)(
510	,0.148)(
610	,0.159)(
710	,0.14)(
810	,0.121)(
910	,0.115)(
1000,0.114)

  } ;
\addplot coordinates {
(410,0.124)(
510	,0.136)(
610	,0.13)(
710,0.118)(
810	,0.09)(
910	,0.079)(
1000,0.062)

  } ;
  
\addlegendentry{S-HK$\backslash$TIN}
\addlegendentry{S-HK1$\backslash$TIN}
\end{axis}
\end{tikzpicture}
\caption{Success probability of conditions (\ref{eqn:fin1}), (\ref{eqn:fin2}), (\ref{eqn:fin3}) for topology 1, $K=3$ }
\label{fig:topology1TINDIspl}
\end{figure}

\begin{figure}[ht]
\centering
\begin{tikzpicture}[scale=0.78]
\begin{axis}
[  ymin=0, ymax=0.4, 
grid=both,
    grid style={line width=.1pt, draw=gray!10},
    major grid style={line width=.2pt,draw=gray!20},xlabel=Coverage radius $r_2$ of transmitters 2 and 3, ylabel=Probability,
legend style={at={(0.5,0.4)},anchor=west},
cycle list={%
blue,mark=*\\%
black,mark=triangle\\%
red, mark=square\\%
}
]

\addplot coordinates {
(160,0.316)(
210	,0.31)(
260	,0.298)(
310	,0.278)(
360	,0.295)(
410,0.288)

  } ;
\addplot coordinates {
(160,0.296)(
210	,0.29)(
260	,0.256)(
310	,0.243)(
360	,0.267)(
410	,0.244)

  } ;
  
\addlegendentry{S-HK$\backslash$TIN}
\addlegendentry{S-HK1$\backslash$TIN}
\end{axis}
\end{tikzpicture}
\caption{Success probability of conditions (\ref{eqn:fin1}), (\ref{eqn:fin2}), (\ref{eqn:fin3}) for topology 2, $K=3$ }
\label{fig:topology2TINDIspl}
\end{figure}

\begin{figure}[ht]
\centering
\begin{tikzpicture}[scale=0.78]
\begin{axis}
[ tick label style={/pgf/number format/fixed},ymin=0, ymax=0.2,  grid=both,
    grid style={line width=.1pt, draw=gray!10},
    major grid style={line width=.2pt,draw=gray!20},xlabel=Coverage area $r_3$ of each transmitter, ylabel=Probability,
legend style={at={(0.1,0.8)},anchor=west},
cycle list={%
blue,mark=*\\%
black,mark=triangle\\%
red, mark=square\\%
}
]

\addplot coordinates {
(160	,0.101)(
210	,0.123)(
260	,0.117)(
310	,0.144)(
360	,0.138)(
410	,0.155
)

  } ;
\addplot coordinates {
(160	,0.084)(
210	,0.1)(
260	,0.1)(
310	,0.131)(
360	,0.127)(
410	,0.138
)
  } ;
  
\addlegendentry{S-HK$\backslash$TIN}
\addlegendentry{S-HK1$\backslash$TIN}
\end{axis}
\end{tikzpicture}
\caption{Success probability of conditions (\ref{eqn:fin1}), (\ref{eqn:fin2}), (\ref{eqn:fin3}) for topology 3, $K=3$ }
\label{fig:topology3TINDIspl}
\end{figure}

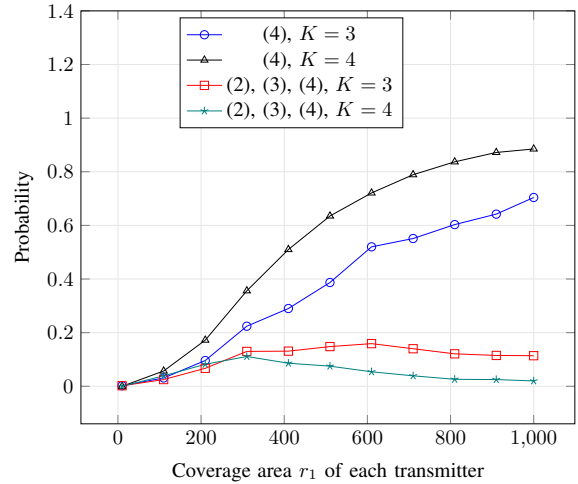
\begin{figure}[ht]
\centering
\begin{tikzpicture}[scale=0.78]
\begin{axis}
[ymax=1.4,    
grid=both,
    grid style={line width=.1pt, draw=gray!10},
    major grid style={line width=.2pt,draw=gray!20},xlabel=Coverage area $r_1$ of each transmitter, ylabel= Probability,
legend style={at={(0.2,0.85)},anchor=west},
cycle list={%
blue,mark=o\\%
black,mark=triangle\\%
red, mark=square\\%
teal,mark=star\\%
}
]


\addplot coordinates {
(10	,0.002)(
110	,0.031)(
210	,0.096)(
310	,0.224)(
410	,0.29)(
510,	0.387)(
610	,0.52)(
710	,0.551)(
810	,0.603)(
910	,0.642)(
1000,0.704)
  } ;

\addplot coordinates {
(10	,0)(
110	,0.057)(
210	,0.172)(
310	,0.356)(
410	,0.51)(
510	,0.635)(
610	,0.721)(
710	,0.789)(
810	,0.837)(
910	,0.872)(
1000	,0.885)
  } ;

\addplot coordinates {
(10	,0.002)(
110	,0.025)(
210	,0.067)(
310	,0.13)(
410	,0.131)(
510	,0.148)(
610	,0.159)(
710	,0.14)(
810	,0.121)(
910	,0.115)(
1000,	0.114
)

};

\addplot coordinates {
(10,	0)(
110,	0.039)(
210,	0.082)(
310	,0.111)(
410	,0.086)(
510	,0.075)(
610	,0.054)(
710	,0.039)(
810	,0.026)(
910	,0.025)(
1000,0.02)

};

\addlegendentry{(\ref{eqn:fin3}), $K=3$}
\addlegendentry{(\ref{eqn:fin3}), $K=4$}
\addlegendentry{(\ref{eqn:fin1}), (\ref{eqn:fin2}), (\ref{eqn:fin3}), $K=3$}
\addlegendentry{(\ref{eqn:fin1}), (\ref{eqn:fin2}), (\ref{eqn:fin3}), $K=4$}
\end{axis}
\end{tikzpicture}
\caption{Success probability of achievability conditions (\ref{eqn:fin3}) and success probability of conditions (\ref{eqn:fin1}), (\ref{eqn:fin2}), (\ref{eqn:fin3}) for all S-HK schemes except TIN with topology 1, $K=3$ and $K=4$}
\label{fig:topology1ach}
\end{figure}

In Fig. \ref{fig:topology1ach}, we plot the success probability of the achievability conditions (\ref{eqn:fin3}) alone and compare them with success probability of all conditions (\ref{eqn:fin1}), (\ref{eqn:fin2}), (\ref{eqn:fin3}) for all S-HK schemes except TIN for topology 1 with $K=3$ and $K=4$. It can be observed that the probability that achievability conditions are satisfied is much larger than the probability that all conditions are satisfied. It is worth noting that whenever the achievability conditions are satisfied, interference can be decoded and the resulting sum rate will be significantly better than the rate achieved by the TIN scheme. Therefore, even when the sum capacity conditions are not satisfied, there is significant improvement in the sum rate of S-HK schemes with interference decoding compare to the TIN scheme.  As $K$ increases, the probability of at least one interference signal being decodable increases as expected.

\section{Conclusions}
We obtained new sum capacity results for the general $K$-user Gaussian IC. We derived two sets of channel conditions under which S-HK schemes are sum capacity optimal for the $K$ user Gaussian IC. This general result also allows us to obtain all existing sum capacity results for 2-user GICs and partially connected GICs like the cascade, many-to-one and one-to-many GICs as special cases. We also study the probability that the channel conditions required for the sum capacity result are satisfied in random wireless networks using Monte Carlo simulations. The numerical results show that S-HK schemes are optimal with significant probability in the considered topologies. By selecting the best S-HK scheme for each channel condition, these results can be used for dynamic interference management and sum rate maximization in wireless networks.

\bibliographystyle{IEEEtran}
\bibliography{ptinref}


\appendices
\section{Converse part of proof of Theorem \ref{thm:covgen}}
\label{app:convthm1}
(Converse) Consider the genie-aided channel where each receiver $i \in [K]$ is given the  genie signal $\bs_i^n= \{\bx_i^n+\bn_i^n, \bx_j^n, j \in D(i)\}$, where $\bn_i^n \sim \mathcal{N}(\mathbf{0}, \sigma_i^2 \mathbf{I})$ and $E[n_i z_i]=\rho_i \sigma_i$, $0 <  \rho_i < 1$. For the result for TIN in \cite{shang2008}, the special case of this genie-aided channel where $D(i)$ is empty for all $i$ was used. Now, the sum capacity can be upper bounded as
\begin{align}
&nC_{sum}  \leq \sum_{i=1}^K I(\bx_i^n;\by_i^n,\bs_i^n) \nonumber \\
& = \sum_{i=1}^K I(\bx_i^n;\by_i^n,\bx_i^n+\bn_i^n | \bx_j^n, j \in D(i)) \nonumber\\
& = \sum_{i=1}^K \left[h(\bx_i^n+\bn_i^n)-h(\bn_i^n)\right]\nonumber\\
& ~+\sum_{i=1}^K h\left(\by_i^n| \bx_i^n+\bn_i^n , \bx_j^n, j \in D(i) \right) \nonumber\\
& ~-\sum_{i=1}^K \left[ h \left(\underset{j \in I(i)}{\sum} h_{ij}\bx_j^n +\bu_i^n\right) \right] \label{eqn:one}
\end{align}
where $u_i^n \sim \mathcal{N}(\mathbf{0}, (1-\rho_i^2) \mathbf{I})$, $\forall i \in [K]$.
Assuming
\begin{equation}
 1-\rho_i^2= \phi_i+ \underset{j \in I(i)}{\sum} h_{ij} \sigma_j^2,  \forall i \in [K]
 \label{epicondition}
 \end{equation}
 where $\phi_i \geq 0$, we can write
\begin{align*}
\mbox{cov}(\bu_i^n)& =  \mbox{cov}(\sqrt{\phi_i} \bn_0^n+\underset{j \in I(i)}{\sum} h_{ij} \bn_j^n),
\end{align*}
where $\bn_0^n \sim \mathcal{N}(\mathbf{0}, \mathbf{I})$ and is independent of $\bn_i^n, \forall i \in [K] $.
Now, we have
\begin{align}
& \mbox{exp} \left[ \frac{2}{n} h \left( \underset{j \in I(i)}{\sum} h_{ij}\bx_j^n +\bu_i^n\right)\right]  \nonumber\\
&=\mbox{exp} \left[ \frac{2}{n} h \left(\sqrt{\phi_i} \bn_0^n+ \underset{j \in I(i)}{\sum}(h_{ij} \bx_j^n+h_{ij} \bn_j^n) \right)\right]  \nonumber\\
& \overset{(a)}{\geq} \mbox{exp} \left[\frac{2}{n} h \left(\sqrt{\phi_i} \bn_0^n \right) \right] + \underset{j \in I(i)}{\sum} \mbox{exp} \left[\frac{2}{n} h \left( h_{ij} \bx_j^n+h_{ij} \bn_j^n\right)   \right]\nonumber\\
&= 2\pi e \phi_i+ \underset{j \in I(i)}{\sum} h_{ij}^2 \mbox{exp} \left[\frac{2}{n} h(\bx_j^n+ \bn_j^n) \right], \label{eqn:two}
 \end{align}
 where (a) follows from entropy-power inequality (EPI).
Therefore, we have
\begin{align*}
&\sum_{i=1}^K \left[h(\bx_i^n+\bn_i^n)-h\left(\underset{j \in I(i)}{\sum} h_{ij}\bx_j^n +\bu_i^n\right) \right] \\
&\leq \frac{n}{2} \sum_{i=1}^K \left[t_i- \log \left( 2\pi e \phi_i+ \underset{j \in I(i)}{\sum} h_{ij}^2 e^{t_j}\right) \right] \triangleq \frac{n}{2} f(\bt).
\end{align*}
where $t_i\triangleq \frac{2}{n} h(\bx_i^n+ \bn_i^n), \forall i \in [K]$ and $\bt$ is the vector of all $t_i$'s. From the power constraints, we have $t_i \leq \log \left[2\pi e (P_i+\sigma_i^2) \right], \ i\in [K].$ Under these constraints on $t_i$, it can be shown as in \cite{shang2008} that  $f(t)$ is maximized at  $t_k=\log \left[2\pi e (P_k+\sigma_k^2)\right]$ provided $\frac{\partial f}{\partial t_k}$ at $t_i=\log \left[2\pi e (P_i+\sigma_i^2)\right], \forall i \in [K] $  are greater than equal to 0. Thus, we have the conditions
\begin{align}
\sum_{i: k \in I(i)} \left[ \frac{h_{ik}^2}{1+Q_i-\rho_i^2}\right] & \leq \frac{1}{P_k+\sigma_k^2}, \ \forall k \in [K]. \label{eqn:conditionn11}
\end{align}
Therefore, we now have
\begin{align*}
C_{sum} & \leq \sum_{i=1}^K I(x_{iG};y_{iG},s_{iG})\\
 &=\sum_{i=1}^K I(x_{iG};y_{iG},x_{jG}, j \in D(i))\\
 &+\sum_{i=1}^K I(x_{iG};x_{iG}+n_{iG}|y_{iG},x_{jG}, j \in D(i))\\
 & \overset{(b)}{=}\sum_{i=1}^K I(x_{iG};y_{iG},x_{jG}, j \in D(i))\\
 &=\underset{i=1}{\overset{K}{\sum}} \frac{1}{2} \log \left[1+\frac{P_i}{1+Q_i}\right]
\end{align*}
where (b) is true if $I(x_{iG};x_{iG}+n_{i}|y_{iG},x_{jG}, j \in D(i))=0, \forall i$. From \cite[Lemma 8]{annvee09}, $I(x_{iG};x_{iG}+n_{iG}|y_{iG},x_{jG}, j \in D(i))=0$ iff
\begin{equation} \label{eqn:smart11}
\rho_i \sigma_i= 1+Q_i ,\forall i\in [K].
\end{equation}
Also, $\phi_i \geq 0, \ \forall i\in [K]$ which implies 
\begin{align}
 \underset{j \in I(i)}{\sum} \frac{h_{ij}^2(1+Q_j)^2}{\rho_j^2} & \leq  1-\rho_i^2,\label{eqn:conditionfin22}
\end{align}

Using the conditions (\ref{eqn:smart11}), (\ref{eqn:conditionfin22}) and (\ref{epicondition}), we get the conditions (\ref{eqn:fin1}) and (\ref{eqn:fin2}) for the converse.

\section{Proof of Theorem 2}
\label{app:proofthm3}
First, we consider a modified channel with no interference at receiver $k$. The sum capacity of the original channel is upper bounded by the sum capacity of the modified channel. 
Then, we derive a genie-aided upper bound for the modified channel using the 
genie signals $\bs_i^n$ at receiver $i$ for each $i \in [K]$ as follows:
\begin{align*}
\bs_i^n & = \{\bx_i^n+\bn_i^n, \bx_j^n, j \in \bar{G}(i)\} , \ \forall i\in [K] \backslash \{m,k\} \\
\bs_m^n & =\{\bx_j^n,\  j \in \bar{G}(m) \backslash k \}\\
\bs_k^n &=h_{mk}\bx_k^n+\underset{j \in G(m)}{\sum} h_{mj} \bx_j^n+ \bn_k^n
\end{align*}
where $\bar{G}(i)=[K] \backslash \{i,\{G(i)\}\}$, $n_i \sim \mathcal{N}(0, \sigma_i^2)$, $E[n_i z_i]=\rho_i \sigma_i$, and  $0 < \rho_i < 1$, for each $i \in [K] \backslash \{m\}$, and $\sigma_k=1$.
This choice of genie can be shown to be useful and smart under conditions (\ref{eqn:jointfin1})-(\ref{eqn:jointfin4}) to obtain the upper bound in the theorem statement. This proof generalizes the proof for the many-to-one GIC in \cite[Theorem 4]{GnaChaBha17} to the general $K$ user GIC.
Assume
\begin{align}
1-\rho_i^2 &= \phi_i+ \underset{j \in G(i)}{\sum} h_{ij}^2\sigma_j^2, i \in [K] \backslash \{m,k \}, \label{assm1}\\
1-\rho_k^2 &= \phi_k+ \underset{j \in G(m)}{\sum} h_{mj}^2 \sigma_j^2, \label{assm2}
\end{align}
where $\phi_i \geq 0$, and assume $h_{ij}^2 \leq 1, \,\, \forall i \in G(i), \ i \in [K]$. Now, we have
\begin{align*}
&nC_{sum} \leq \sum_{i=1}^K I(\bx_i^n;\by_i^n,\bs_i^n) \nonumber \\
& =  I(\bx_m^n;\by_m^n| \bs_m^n)  + I(\bx_k^n;\by_k^n,\bs_k^n) + \sum_{\underset{i \neq \{m,k\}}{i=1}}^K I(\bx_i^n;\by_i^n,\bs_i^n) \nonumber \\
& = h(\by_m^n |\bs_m^n)-h(\by_m^n |\bs_m^n, \bx_m^n )+h(\bs_k^n)+h(\by_k^n | \bs_k^n)\nonumber\\
& ~ ~ ~-h(\bz_k^n)-h(\bs_k^n| \by_k^n,\bx_k^n) + \sum_{\underset{i \neq \{m,k\}}{i=1}}^K I(\bx_i^n;\by_i^n,\bs_i^n)  \nonumber \\
&\overset{(a)}{=} h(\by_m^n |\bs_m^n)+h(\by_k^n | \bs_k^n)-h(\bz_k^n)\nonumber\\
&  -h\left(\sum_{j \in G(m)} h_{mj}\bx_j^n+\bu_k^n\right)+ \sum_{\underset{i \neq \{m,k\}}{i=1}}^K [h(\bx_i^n+\bn_i^n)] \nonumber \\
& +\sum_{\underset{i \neq \{m,k\}}{i=1}}^K  \left[h\left(\by_i^n| s_i^n\right) -h(\bn_i^n)  -h\left(\underset{j \in G(i)}{\sum} h_{ij}\bx_j^n +\bu_i^n\right)  \right]  
\end{align*}
where (a) follows because $h(\bs_k^n)=h(\by_m^n |\bs_m^n, \bx_m^n )$, $\bu_i^n \sim \mathcal{N}(\mathbf{0}, (1-\rho_i^2) \mathbf{I})$ , $\forall i \in [K]\backslash \{m\}$. 
Also note that from (\ref{assm1}) and (\ref{assm2}), we have
\begin{align*}
\mbox{cov}(\bu_i^n)& = \mbox{cov}(\sqrt{\phi_i} \bn_0^n+\underset{j \in G(i)}{\sum} h_{ij} \bn_j^n) \forall i \in [K] \backslash \{m,k\},\\
 \mbox{cov}(\bu_k^n)& = \mbox{cov}(\sqrt{\phi_k} \bn_0^n+\underset{j \in G(m)}{\sum} h_{mj} \bn_j^n),\\
\end{align*}
 where $\bn_0^n \sim \mathcal{N}(\mathbf{0}, \mathbf{I})$ and $\bn_0^n $ is independent of $\bn_i^n, \ i \in [K] \backslash \{m\} $. 
From EPI, we have
 \begin{align}
& \exp \left[ \frac{2}{n} h \left( \underset{j \in G(i)}{\sum} h_{ij}\bx_j^n +\bu_i^n\right)\right]  \nonumber\\
&=\exp \left[ \frac{2}{n} h \left(\sqrt{\phi_i} \bn_0^n+ \underset{j \in G(i)}{\sum}(h_{ij} \bx_j^n+h_{ij} \bn_j^n) \right)\right]  \nonumber\\
& \geq \exp \left[\frac{2}{n} h \left(\sqrt{\phi_i} \bn_0^n \right) \right] + \underset{j \in G(i)}{\sum} \exp \left[\frac{2}{n} h \left( h_{ij} \bx_j^n+h_{ij} \bn_j^n\right) \right] \nonumber\\
&= 2\pi e \phi_i+ \underset{j \in G(i)}{\sum}  \exp \left[\frac{2}{n} h(\bx_j^n+ \bn_j^n) \right] \exp\left[\frac{2}{n}\log (h_{ij}^n)\right]\\
&= 2\pi e \phi_i+ \underset{j \in G(i)}{\sum} h_{ij}^2 \exp \left[\frac{2}{n} h(\bx_j^n+ \bn_j^n) \right] \label{eqn:two}
 \end{align}
Considering the terms that are not directly maximized by Gaussian inputs, we get
\begin{align*}
&\sum_{\underset{i \neq m,k}{i=1}}^K \left[h(\bx_i^n+\bn_i^n)-h\left(\underset{j \in G(i)}{\sum} h_{ij}\bx_j^n +\bu_i^n\right)\right]\\
& ~ - h\left(\underset{j \in G(m)}{\sum} h_{mj}\bx_j^n +\bu_k^n\right)  \\
&\overset{(b)}{\leq} \frac{n}{2} \sum_{\underset{i \neq m,k}{i=1}}^K \left[t_i- \log \left( 2\pi e \phi_i+ \underset{j \in G(i)}{\sum} h_{ij}^2 e^{t_j}\right) \right]\\
&~ - \frac{n}{2} \log \left( 2\pi e \phi_k+ \underset{j \in G(m)}{\sum} h_{mj}^2 e^{t_j}\right) \triangleq \frac{n}{2} f(\bt),
\end{align*}
where (b) follows from (\ref{eqn:two}) and $t_i\triangleq \frac{2}{n} h(\bx_i^n+\bn_i^n), \ i \in [K] $. 
From the power constraints, we have
\[ t_i \leq \log \left[2\pi e (P_i+\sigma_i^2) \right], \ i=1,\cdots,K .\]\\
Since $m, k \notin G(i)$, $\forall i \in [K] \backslash \{k\}$, terms $t_m$, $t_k$ do not appear in $f(t)$, and we consider the optimization problem
\begin{equation}
\begin{aligned}
& \text{max}
& &   f(\bt)\\
& \text{s.t}
& & t_i \leq \log \left[2\pi e (P_i+\sigma_i^2)\right], \nonumber \\ 
& & & \forall i\in [K] \backslash \{m,k\}.
\end{aligned}
\end{equation}
The Lagrangian for the above optimization problem is 
\[ L=f(\bt)- \underset{i\neq m,k}{\sum} \mu_i[t_i- \log \left[2\pi e (P_i+\sigma_i^2)\right]], \]
where $\mu_i \geq 0 , \forall i \in [K]$. At optimal $t_r, \frac{\partial L}{\partial t_r}=0$. We want optimal $t_r=\frac{1}{2}\log \left[2\pi e (P_r+\sigma_r^2)\right]$ for Gaussian inputs to be optimal for the genie-aided channel. From KKT conditions, $\frac{\partial f}{\partial t_r} \geq 0$ at $t_r= \log\left[2\pi e (P_r+\sigma_r^2)\right], \forall r \in [K] \backslash \{m,k \} $ (since $ \mu_r \geq 0$). Thus, we have
\begin{IEEEeqnarray}{lcr}
\frac{1}{P_r+\sigma_r^2}   \geq  \sum_{\underset{i \neq \{m,k\}}{i: r \in G(i)}} \left[ \frac{h_{ir}^2}{1+Q_i-\rho_i^2}\right]\nonumber \\
+ \delta_r \left[\frac{h_{mr}^2}{1+Q_m-\rho_k^2} \right] , \forall \ r \in [K] \backslash  \{m,k\},
 \label{eqn:conditionn1}
\end{IEEEeqnarray}
where $Q_i, \delta_r$ are defined as in the theorem statement.
 Therefore, we now have
\begin{align*}
C_{sum} & \leq \sum_{i=1}^K I(x_{iG};y_{iG},s_{iG})\\
&= I(x_{mG};y_{mG}| s_{mG})+I(x_{kG};y_{kG}, s_{kG})\\
& +\sum_{\underset{i \neq \{m,k\}}{i=1}}^K I(x_{iG};y_{iG},s_{iG})\\
& \overset{(c)}{=} I(x_{mG};y_{mG}| s_{mG})+I(x_{kG};s_{kG})\\
& +\sum_{\underset{i \neq \{m,k\}}{i=1}}^K I(x_{iG};y_{iG} | x_{jG}, j \in \bar{G}(i))\\
& \overset{(d)}{=} I(x_{mG},x_{kG};y_{mG}| s_{mG})\\
&+\sum_{\underset{i \neq \{m,k\}}{i=1}}^K I(x_{iG};y_{iG} | x_{jG}, j \in \bar{G}(i)),
\end{align*}
(c) is valid when $I(x_{iG};x_{iG}+n_i|y_{iG},x_{jG},j \in \bar{G}(i))=0$, $\forall i\in [K] \backslash \{m,k\}$ and $I(x_{kG};y_{kG}|s_{kG})=0$. From \cite[Lemma 8]{annvee09}, $I(x_{iG};x_{iG}+n_i|y_{iG},x_{jG},j \in \bar{G}(i))=I \left(x_{iG};x_{iG}+n_i \bigg |(x_{iG}+ \underset{j \in I(i)}{\sum} h_{ij}x_{jG}+z_i)\right)=0$, $\forall i\in [K] \backslash \{m,k\}$ and $I(x_{kG};y_{kG} |s_{kG})=I \left(x_{kG};x_{kG}+z_k \bigg | \frac{x_{kG}+\underset{j \in I(m)}{\sum} h_{mj}x_{jG}+n_k}{h_{mk}} \right)=0$ iff 
\begin{align} 
\rho_i \sigma_i & = 1+Q_i ,\forall i\in [K] \backslash \{m,k\}, \label{eqn:smart1} \\
\rho_k h_{mk} &= 1+Q_m. \label{eqn:smart2}
\end{align}
(d) is valid since genie $s_k$ is chosen such that $h(s_k)=h(y_m|s_m,x_m)$ which implies $I(x_{kG};s_{kG})=I(x_{kG};y_{mG}|s_{mG},x_{mG})$.

Using $\phi_i \geq 0$ and (\ref{eqn:smart1}), we get conditions (\ref{eqn:jointfin1})-(\ref{eqn:jointfin4}) for the upper bound on  $C_{sum}$ to be valid.

\section{2-user GIC results}
\label{app:2user}
For a 2 user GIC, there are only 4 possible S-HK schemes.
For each S-HK scheme, the sum capacity and the required conditions using Theorem \ref{thm:covgen} are given in Table \ref{table:shk2user1}.
\begin{table*}
\centering

\begin{tabular}{|p{1.38cm} | p{2.35cm} | p{2.7cm}|p{1cm}|} 
  \hline
  S-HK scheme & Optimality conditions & Sum capacity & Matches\\ \hline
  $I(1)=\{2\}$  & $|h_{12}(1+h_{21}^2 P_1)|+$  & $\frac{1}{2} \log(1+\frac{P_1}{1+h_{12}^2 P_1})$  & \cite{shakrache09,motkha09,annvee09}\\ 
  $I(2)=\{1\}$ & $|h_{21}(1+h_{12}^2 P_2)| \leq 1$ &  $+\frac{1}{2}\log(1+ \frac{P_2}{1+h_{21}^2 P_1})$ & \\
 \hline
 $I(1)=\{\}$ & $h_{21}^2 \leq 1$, & $\frac{1}{2} \log(1+P_1)+$ & Thm. 10 in \cite{motkha09}\\
 $I(2)=\{1\}$ & $h_{12}^2 \geq \frac{1+P_1}{1+h_{21}^2 P_1}$  &  $\frac{1}{2}\log(1+ \frac{P_2}{1+h_{21}^2 P_1})$ &\\
 \hline
 $I(1)=\{2\}$ & $h_{12}^2 \leq 1 $, & $\frac{1}{2} \log(1+P_2)+$ & Thm. 10 in \cite{motkha09}\\
 $I(2)=\{\}$ &  $h_{21}^2 \geq \frac{1+P_2}{1+h_{12}^2 P_2}$ & $\frac{1}{2} \log(1+ \frac{P_1}{1+h_{12}^2 P_2})$ &\\ 
 \hline
 $I(1)=\{\}$ & $h_{12}^2 \geq 1+P_1 $, & $\frac{1}{2} \log(1+P_1)+$ & \cite{car75}\\
 $I(2)=\{\}$ & $h_{21}^2 \geq 1+P_2$ & $\frac{1}{2} \log(1+P_2)$ & \\
 \hline
\end{tabular}

\caption{Optimality conditions and sum capacity of S-HK schemes for a 2 user GIC using Theorem \ref{thm:covgen}.}
\label{table:shk2user1}
\end{table*}
For each S-HK scheme, the sum capacity and the required conditions using Theorems \ref{thm:achGIC} and \ref{thm:convjoint} are given in Table \ref{table:shk2user2} and these conditions are plotted in Fig \ref{fig:2-user}.

\begin{figure}
\centering
\resizebox{7cm}{6.5cm}{
\begin{tikzpicture}
\begin{axis}[xlabel={$h_{12}$},
    ylabel={$h_{21}$},
    xmin=0, xmax=3,
    ymin=0, ymax=3,
    xtick={0,0.5,1,1.5,2,2.5,3},
    ytick={0,0.5,1,1.5,2,2.5,3},
]
\addplot[ fill=blue, 
                    fill opacity=0.2, color=blue]
    coordinates {	(1,0)	(0.998,0.001)	(0.996,0.002)	(0.994,0.003)	(0.992,0.004)	(0.99,0.005)	(0.988,0.006)	(0.986,0.007)	(0.984,0.008)	(0.982,0.009)	(0.98,0.01)	(0.978,0.011)	(0.976,0.012)	(0.963,0.019)	(0.961,0.02)	(0.948,0.027)	(0.937,0.033)	(0.926,0.039)	(0.917,0.044)	(0.91,0.048)	(0.903,0.052)	(0.896,0.056)	(0.891,0.059)	(0.884,0.063)	(0.879,0.066)	(0.874,0.069)	(0.869,0.072)	(0.864,0.075)	(0.859,0.078)	(0.851,0.083)	(0.846,0.086)	(0.838,0.091)	(0.827,0.098)	(0.816,0.105)	(0.805,0.112)	(0.802,0.114)	(0.799,0.116)	(0.796,0.118)	(0.793,0.12)	(0.79,0.122)	(0.787,0.124)	(0.784,0.126)	(0.781,0.128)	(0.778,0.13)	(0.775,0.132)	(0.772,0.134)	(0.769,0.136)	(0.759,0.143)	(0.756,0.145)	(0.746,0.152)	(0.739,0.157)	(0.729,0.164)	(0.725,0.167)	(0.718,0.172)	(0.711,0.177)	(0.707,0.18)	(0.703,0.183)	(0.696,0.188)	(0.692,0.191)	(0.688,0.194)	(0.684,0.197)	(0.68,0.2)	(0.671,0.207)	(0.667,0.21)	(0.658,0.217)	(0.649,0.224)	(0.64,0.231)	(0.635,0.235)	(0.63,0.239)	(0.625,0.243)	(0.62,0.247)	(0.614,0.252)	(0.609,0.256)	(0.603,0.261)	(0.598,0.265)	(0.592,0.27)	(0.586,0.275)	(0.579,0.281)	(0.573,0.286)	(0.566,0.292)	(0.558,0.299)	(0.551,0.305)	(0.543,0.312)	(0.534,0.32)	(0.525,0.328)	(0.515,0.337)	(0.503,0.348)	(0.49,0.36)	(0.477,0.372)	(0.476,0.373)	(0.475,0.374)	(0.474,0.375)	(0.473,0.376)	(0.46,0.388)	(0.459,0.389)	(0.458,0.39)	(0.457,0.391)	(0.456,0.392)	(0.455,0.393)	(0.454,0.394)	(0.453,0.395)	(0.452,0.396)	(0.451,0.397)	(0.45,0.398)	(0.449,0.399)	(0.448,0.4)	(0.435,0.412)	(0.434,0.413)	(0.433,0.414)	(0.432,0.415)	(0.431,0.416)	(0.43,0.417)	(0.429,0.418)	(0.428,0.419)	(0.427,0.42)	(0.426,0.421)	(0.425,0.422)	(0.424,0.423)	(0.423,0.424)	(0.422,0.425)	(0.421,0.426)	(0.42,0.427)	(0.419,0.428)	(0.418,0.429)	(0.417,0.43)	(0.416,0.431)	(0.415,0.432)	(0.414,0.433)	(0.413,0.434)	(0.412,0.435)	(0.4,0.448)	(0.399,0.449)	(0.398,0.45)	(0.397,0.451)	(0.396,0.452)	(0.395,0.453)	(0.394,0.454)	(0.393,0.455)	(0.392,0.456)	(0.391,0.457)	(0.39,0.458)	(0.389,0.459)	(0.388,0.46)	(0.376,0.473)	(0.375,0.474)	(0.374,0.475)	(0.373,0.476)	(0.372,0.477)	(0.36,0.49)	(0.348,0.503)	(0.337,0.515)	(0.328,0.525)	(0.32,0.534)	(0.312,0.543)	(0.305,0.551)	(0.299,0.558)	(0.292,0.566)	(0.286,0.573)	(0.281,0.579)	(0.275,0.586)	(0.27,0.592)	(0.265,0.598)	(0.261,0.603)	(0.256,0.609)	(0.252,0.614)	(0.247,0.62)	(0.243,0.625)	(0.239,0.63)	(0.235,0.635)	(0.231,0.64)	(0.224,0.649)	(0.217,0.658)	(0.21,0.667)	(0.207,0.671)	(0.2,0.68)	(0.197,0.684)	(0.194,0.688)	(0.191,0.692)	(0.188,0.696)	(0.183,0.703)	(0.18,0.707)	(0.177,0.711)	(0.172,0.718)	(0.167,0.725)	(0.164,0.729)	(0.157,0.739)	(0.152,0.746)	(0.145,0.756)	(0.143,0.759)	(0.136,0.769)	(0.134,0.772)	(0.132,0.775)	(0.13,0.778)	(0.128,0.781)	(0.126,0.784)	(0.124,0.787)	(0.122,0.79)	(0.12,0.793)	(0.118,0.796)	(0.116,0.799)	(0.114,0.802)	(0.112,0.805)	(0.105,0.816)	(0.098,0.827)	(0.091,0.838)	(0.086,0.846)	(0.083,0.851)	(0.078,0.859)	(0.075,0.864)	(0.072,0.869)	(0.069,0.874)	(0.066,0.879)	(0.063,0.884)	(0.059,0.891)	(0.056,0.896)	(0.052,0.903)	(0.048,0.91)	(0.044,0.917)	(0.039,0.926)	(0.033,0.937)	(0.027,0.948)	(0.02,0.961)	(0.019,0.963)	(0.012,0.976)	(0.011,0.978)	(0.01,0.98)	(0.009,0.982)	(0.008,0.984)	(0.007,0.986)	(0.006,0.988)	(0.005,0.99)	(0.004,0.992)	(0.003,0.994)	(0.002,0.996)	(0.001,0.998)	(0,0)	(0,0)	(1,0)	
    
     };
  \addplot[ fill=red, 
                    fill opacity=0.2, color=red]
    coordinates {	(1,1)	(1.007,0.987)	(1.008,0.985)	(1.009,0.983)	(1.01,0.981)	(1.011,0.979)	(1.012,0.977)	(1.013,0.975)	(1.014,0.973)	(1.015,0.971)	(1.016,0.969)	(1.017,0.967)	(1.018,0.965)	(1.019,0.963)	(1.02,0.961)	(1.021,0.959)	(1.022,0.957)	(1.023,0.955)	(1.024,0.953)	(1.025,0.951)	(1.026,0.949)	(1.027,0.947)	(1.028,0.945)	(1.029,0.943)	(1.03,0.941)	(1.031,0.939)	(1.032,0.937)	(1.039,0.924)	(1.04,0.922)	(1.041,0.92)	(1.042,0.918)	(1.043,0.916)	(1.044,0.914)	(1.045,0.912)	(1.046,0.91)	(1.053,0.897)	(1.054,0.895)	(1.055,0.893)	(1.056,0.891)	(1.057,0.889)	(1.058,0.887)	(1.065,0.874)	(1.066,0.872)	(1.067,0.87)	(1.074,0.857)	(1.075,0.855)	(1.076,0.853)	(1.083,0.84)	(1.084,0.838)	(1.091,0.825)	(1.092,0.823)	(1.099,0.81)	(1.106,0.797)	(1.113,0.784)	(1.12,0.771)	(1.126,0.76)	(1.132,0.749)	(1.139,0.736)	(1.145,0.725)	(1.151,0.714)	(1.157,0.703)	(1.163,0.692)	(1.17,0.679)	(1.176,0.668)	(1.182,0.657)	(1.189,0.644)	(1.195,0.633)	(1.202,0.62)	(1.209,0.607)	(1.216,0.594)	(1.217,0.592)	(1.224,0.579)	(1.225,0.577)	(1.226,0.575)	(1.233,0.562)	(1.234,0.56)	(1.235,0.558)	(1.242,0.545)	(1.243,0.543)	(1.244,0.541)	(1.245,0.539)	(1.246,0.537)	(1.247,0.535)	(1.254,0.522)	(1.255,0.52)	(1.256,0.518)	(1.257,0.516)	(1.258,0.514)	(1.259,0.512)	(1.26,0.51)	(1.261,0.508)	(1.262,0.506)	(1.263,0.504)	(1.264,0.502)	(1.265,0.5)	(1.266,0.498)	(1.267,0.496)	(1.268,0.494)	(1.269,0.492)	(1.27,0.49)	(1.271,0.488)	(1.272,0.486)	(1.273,0.484)	(1.274,0.482)	(1.275,0.48)	(1.276,0.478)	(1.277,0.476)	(1.278,0.474)	(1.279,0.472)	(1.28,0.47)	(1.281,0.468)	(1.282,0.466)	(1.283,0.464)	(1.284,0.462)	(1.285,0.46)	(1.286,0.458)	(1.287,0.456)	(1.288,0.454)	(1.294,0.441)	(1.295,0.439)	(1.296,0.437)	(1.297,0.435)	(1.303,0.422)	(1.309,0.409)	(1.314,0.398)	(1.319,0.387)	(1.322,0.38)	(1.326,0.371)	(1.329,0.364)	(1.332,0.357)	(1.334,0.352)	(1.337,0.345)	(1.339,0.34)	(1.341,0.335)	(1.343,0.33)	(1.345,0.325)	(1.347,0.32)	(1.349,0.315)	(1.352,0.307)	(1.355,0.299)	(1.359,0.288)	(1.363,0.277)	(1.364,0.274)	(1.365,0.271)	(1.372,0.25)	(1.373,0.247)	(1.374,0.244)	(1.378,0.231)	(1.381,0.221)	(1.383,0.214)	(1.386,0.203)	(1.389,0.192)	(1.39,0.188)	(1.391,0.184)	(1.392,0.18)	(1.394,0.171)	(1.395,0.167)	(1.397,0.158)	(1.398,0.153)	(1.399,0.148)	(1.4,0.143)	(1.401,0.138)	(1.403,0.127)	(1.404,0.121)	(1.405,0.115)	(1.406,0.109)	(1.407,0.102)	(1.408,0.095)	(1.409,0.087)	(1.41,0.078)	(1.411,0.068)	(1.412,0.057)	(1.413,0.042)	(1.414,0.021)	(1.414,0.02)	(1.414,0.019)	(1.414,0.018)	(1.415,0)	(3,0)	(3,1)	(1,1)																																																										
    };
     \addplot[ fill=brown, 
                    fill opacity=0.2, color=brown]
    coordinates {	(0,1.415)	(0.018,1.414)	(0.019,1.414)	(0.02,1.414)	(0.042,1.413)	(0.057,1.412)	(0.068,1.411)	(0.078,1.41)	(0.087,1.409)	(0.095,1.408)	(0.102,1.407)	(0.109,1.406)	(0.115,1.405)	(0.121,1.404)	(0.127,1.403)	(0.138,1.401)	(0.143,1.4)	(0.148,1.399)	(0.153,1.398)	(0.158,1.397)	(0.167,1.395)	(0.171,1.394)	(0.18,1.392)	(0.184,1.391)	(0.188,1.39)	(0.192,1.389)	(0.203,1.386)	(0.214,1.383)	(0.221,1.381)	(0.231,1.378)	(0.244,1.374)	(0.247,1.373)	(0.25,1.372)	(0.271,1.365)	(0.274,1.364)	(0.277,1.363)	(0.288,1.359)	(0.299,1.355)	(0.307,1.352)	(0.315,1.349)	(0.32,1.347)	(0.325,1.345)	(0.33,1.343)	(0.335,1.341)	(0.34,1.339)	(0.345,1.337)	(0.352,1.334)	(0.357,1.332)	(0.364,1.329)	(0.371,1.326)	(0.38,1.322)	(0.387,1.319)	(0.398,1.314)	(0.409,1.309)	(0.422,1.303)	(0.435,1.297)	(0.437,1.296)	(0.439,1.295)	(0.441,1.294)	(0.454,1.288)	(0.456,1.287)	(0.458,1.286)	(0.46,1.285)	(0.462,1.284)	(0.464,1.283)	(0.466,1.282)	(0.468,1.281)	(0.47,1.28)	(0.472,1.279)	(0.474,1.278)	(0.476,1.277)	(0.478,1.276)	(0.48,1.275)	(0.482,1.274)	(0.484,1.273)	(0.486,1.272)	(0.488,1.271)	(0.49,1.27)	(0.492,1.269)	(0.494,1.268)	(0.496,1.267)	(0.498,1.266)	(0.5,1.265)	(0.502,1.264)	(0.504,1.263)	(0.506,1.262)	(0.508,1.261)	(0.51,1.26)	(0.512,1.259)	(0.514,1.258)	(0.516,1.257)	(0.518,1.256)	(0.52,1.255)	(0.522,1.254)	(0.535,1.247)	(0.537,1.246)	(0.539,1.245)	(0.541,1.244)	(0.543,1.243)	(0.545,1.242)	(0.558,1.235)	(0.56,1.234)	(0.562,1.233)	(0.575,1.226)	(0.577,1.225)	(0.579,1.224)	(0.592,1.217)	(0.594,1.216)	(0.607,1.209)	(0.62,1.202)	(0.633,1.195)	(0.644,1.189)	(0.657,1.182)	(0.668,1.176)	(0.679,1.17)	(0.692,1.163)	(0.703,1.157)	(0.714,1.151)	(0.725,1.145)	(0.736,1.139)	(0.749,1.132)	(0.76,1.126)	(0.771,1.12)	(0.784,1.113)	(0.797,1.106)	(0.81,1.099)	(0.823,1.092)	(0.825,1.091)	(0.838,1.084)	(0.84,1.083)	(0.853,1.076)	(0.855,1.075)	(0.857,1.074)	(0.87,1.067)	(0.872,1.066)	(0.874,1.065)	(0.887,1.058)	(0.889,1.057)	(0.891,1.056)	(0.893,1.055)	(0.895,1.054)	(0.897,1.053)	(0.91,1.046)	(0.912,1.045)	(0.914,1.044)	(0.916,1.043)	(0.918,1.042)	(0.92,1.041)	(0.922,1.04)	(0.924,1.039)	(0.937,1.032)	(0.939,1.031)	(0.941,1.03)	(0.943,1.029)	(0.945,1.028)	(0.947,1.027)	(0.949,1.026)	(0.951,1.025)	(0.953,1.024)	(0.955,1.023)	(0.957,1.022)	(0.959,1.021)	(0.961,1.02)	(0.963,1.019)	(0.965,1.018)	(0.967,1.017)	(0.969,1.016)	(0.971,1.015)	(0.973,1.014)	(0.975,1.013)	(0.977,1.012)	(0.979,1.011)	(0.981,1.01)	(0.983,1.009)	(0.985,1.008)	(0.987,1.007)	(1,1)	(1,3)	(0,3)	(0,1.415)	};				
    \addplot[ fill=black, 
                    fill opacity=0.2, color=black]
    coordinates {(1.415,1.415)	(3,1.415)	(3,3)	(1.415,3)	(1.415,1.415)};
     \addplot[ fill=orange, 
                    fill opacity=0.2, color=orange]
    coordinates {	(1.414,0)	(1.414,0.001)	(1.414,0.002)	(1.414,0.003)	(1.414,0.004)	(1.414,0.005)	(1.414,0.006)	(1.414,0.007)	(1.414,0.008)	(1.414,0.009)	(1.414,0.01)	(1.414,0.011)	(1.414,0.012)	(1.414,0.013)	(1.414,0.014)	(1.414,0.015)	(1.414,0.016)	(1.414,0.017)	(1.413,0.037)	(1.413,0.038)	(1.413,0.039)	(1.413,0.04)	(1.413,0.041)	(1.412,0.056)	(1.411,0.067)	(1.41,0.077)	(1.409,0.086)	(1.408,0.094)	(1.407,0.101)	(1.406,0.108)	(1.405,0.114)	(1.404,0.12)	(1.403,0.126)	(1.402,0.132)	(1.401,0.137)	(1.4,0.142)	(1.399,0.147)	(1.398,0.152)	(1.397,0.157)	(1.396,0.162)	(1.395,0.166)	(1.393,0.175)	(1.392,0.179)	(1.391,0.183)	(1.39,0.187)	(1.389,0.191)	(1.388,0.195)	(1.387,0.199)	(1.384,0.21)	(1.382,0.217)	(1.38,0.224)	(1.377,0.234)	(1.376,0.237)	(1.375,0.24)	(1.374,0.243)	(1.371,0.253)	(1.37,0.256)	(1.369,0.259)	(1.368,0.262)	(1.367,0.265)	(1.366,0.268)	(1.362,0.279)	(1.361,0.282)	(1.36,0.285)	(1.356,0.296)	(1.353,0.304)	(1.35,0.312)	(1.348,0.317)	(1.346,0.322)	(1.344,0.327)	(1.342,0.332)	(1.34,0.337)	(1.338,0.342)	(1.336,0.347)	(1.333,0.354)	(1.331,0.359)	(1.328,0.366)	(1.325,0.373)	(1.321,0.382)	(1.318,0.389)	(1.313,0.4)	(1.308,0.411)	(1.307,0.413)	(1.302,0.424)	(1.301,0.426)	(1.3,0.428)	(1.299,0.43)	(1.298,0.432)	(1.293,0.443)	(1.292,0.445)	(1.291,0.447)	(1.29,0.449)	(1.289,0.451)	(1.288,0.453)	(1.287,0.455)	(1.286,0.457)	(1.285,0.459)	(1.284,0.461)	(1.283,0.463)	(1.282,0.465)	(1.281,0.467)	(1.28,0.469)	(1.279,0.471)	(1.278,0.473)	(1.277,0.475)	(1.276,0.477)	(1.275,0.479)	(1.274,0.481)	(1.273,0.483)	(1.272,0.485)	(1.271,0.487)	(1.27,0.489)	(1.269,0.491)	(1.268,0.493)	(1.267,0.495)	(1.266,0.497)	(1.265,0.499)	(1.264,0.501)	(1.263,0.503)	(1.262,0.505)	(1.261,0.507)	(1.26,0.509)	(1.259,0.511)	(1.258,0.513)	(1.257,0.515)	(1.256,0.517)	(1.255,0.519)	(1.254,0.521)	(1.253,0.523)	(1.252,0.525)	(1.251,0.527)	(1.25,0.529)	(1.249,0.531)	(1.248,0.533)	(1.242,0.544)	(1.241,0.546)	(1.24,0.548)	(1.239,0.55)	(1.238,0.552)	(1.237,0.554)	(1.236,0.556)	(1.23,0.567)	(1.229,0.569)	(1.228,0.571)	(1.227,0.573)	(1.221,0.584)	(1.22,0.586)	(1.219,0.588)	(1.218,0.59)	(1.212,0.601)	(1.211,0.603)	(1.21,0.605)	(1.204,0.616)	(1.203,0.618)	(1.197,0.629)	(1.196,0.631)	(1.19,0.642)	(1.184,0.653)	(1.183,0.655)	(1.177,0.666)	(1.171,0.677)	(1.165,0.688)	(1.164,0.69)	(1.158,0.701)	(1.152,0.712)	(1.146,0.723)	(1.14,0.734)	(1.134,0.745)	(1.133,0.747)	(1.127,0.758)	(1.121,0.769)	(1.115,0.78)	(1.114,0.782)	(1.108,0.793)	(1.107,0.795)	(1.101,0.806)	(1.1,0.808)	(1.094,0.819)	(1.093,0.821)	(1.087,0.832)	(1.086,0.834)	(1.085,0.836)	(1.079,0.847)	(1.078,0.849)	(1.077,0.851)	(1.071,0.862)	(1.07,0.864)	(1.069,0.866)	(1.068,0.868)	(1.062,0.879)	(1.061,0.881)	(1.06,0.883)	(1.059,0.885)	(1.053,0.896)	(1.052,0.898)	(1.051,0.9)	(1.05,0.902)	(1.049,0.904)	(1.048,0.906)	(1.047,0.908)	(1.041,0.919)	(1.04,0.921)	(1.039,0.923)	(1.038,0.925)	(1.037,0.927)	(1.036,0.929)	(1.035,0.931)	(1.034,0.933)	(1.033,0.935)	(1.027,0.946)	(1.026,0.948)	(1.025,0.95)	(1.024,0.952)	(1.023,0.954)	(1.022,0.956)	(1.021,0.958)	(1.02,0.96)	(1.019,0.962)	(1.018,0.964)	(1.017,0.966)	(1.016,0.968)	(1.015,0.97)	(1.014,0.972)	(1.013,0.974)	(1.012,0.976)	(1.011,0.978)	(1.01,0.98)	(1.009,0.982)	(1.008,0.984)	(1.007,0.986)	(1.006,0.988)	(1.005,0.99)	(1.004,0.992)	(1.003,0.994)	(1.002,0.996)	(1.001,0.998)	(1,1)	(1,0)	(1.414,0)};
     \addplot[ fill=green, 
                    fill opacity=0.2, color=green]
    coordinates {	(1,1)	(0.998,1.001)	(0.996,1.002)	(0.994,1.003)	(0.992,1.004)	(0.99,1.005)	(0.988,1.006)	(0.986,1.007)	(0.984,1.008)	(0.982,1.009)	(0.98,1.01)	(0.978,1.011)	(0.976,1.012)	(0.974,1.013)	(0.972,1.014)	(0.97,1.015)	(0.968,1.016)	(0.966,1.017)	(0.964,1.018)	(0.962,1.019)	(0.96,1.02)	(0.958,1.021)	(0.956,1.022)	(0.954,1.023)	(0.952,1.024)	(0.95,1.025)	(0.948,1.026)	(0.946,1.027)	(0.935,1.033)	(0.933,1.034)	(0.931,1.035)	(0.929,1.036)	(0.927,1.037)	(0.925,1.038)	(0.923,1.039)	(0.921,1.04)	(0.908,1.047)	(0.906,1.048)	(0.904,1.049)	(0.902,1.05)	(0.9,1.051)	(0.898,1.052)	(0.885,1.059)	(0.883,1.06)	(0.881,1.061)	(0.879,1.062)	(0.868,1.068)	(0.866,1.069)	(0.864,1.07)	(0.851,1.077)	(0.849,1.078)	(0.836,1.085)	(0.834,1.086)	(0.832,1.087)	(0.821,1.093)	(0.819,1.094)	(0.808,1.1)	(0.795,1.107)	(0.793,1.108)	(0.782,1.114)	(0.769,1.121)	(0.758,1.127)	(0.747,1.133)	(0.745,1.134)	(0.734,1.14)	(0.723,1.146)	(0.712,1.152)	(0.701,1.158)	(0.69,1.164)	(0.688,1.165)	(0.677,1.171)	(0.666,1.177)	(0.655,1.183)	(0.642,1.19)	(0.631,1.196)	(0.629,1.197)	(0.618,1.203)	(0.605,1.21)	(0.603,1.211)	(0.59,1.218)	(0.588,1.219)	(0.586,1.22)	(0.584,1.221)	(0.573,1.227)	(0.571,1.228)	(0.569,1.229)	(0.556,1.236)	(0.554,1.237)	(0.552,1.238)	(0.55,1.239)	(0.548,1.24)	(0.546,1.241)	(0.533,1.248)	(0.531,1.249)	(0.529,1.25)	(0.527,1.251)	(0.525,1.252)	(0.523,1.253)	(0.521,1.254)	(0.519,1.255)	(0.517,1.256)	(0.515,1.257)	(0.513,1.258)	(0.511,1.259)	(0.509,1.26)	(0.507,1.261)	(0.505,1.262)	(0.503,1.263)	(0.501,1.264)	(0.499,1.265)	(0.497,1.266)	(0.495,1.267)	(0.493,1.268)	(0.491,1.269)	(0.489,1.27)	(0.487,1.271)	(0.485,1.272)	(0.483,1.273)	(0.481,1.274)	(0.479,1.275)	(0.477,1.276)	(0.475,1.277)	(0.473,1.278)	(0.471,1.279)	(0.469,1.28)	(0.467,1.281)	(0.465,1.282)	(0.463,1.283)	(0.461,1.284)	(0.459,1.285)	(0.457,1.286)	(0.455,1.287)	(0.453,1.288)	(0.451,1.289)	(0.449,1.29)	(0.447,1.291)	(0.445,1.292)	(0.443,1.293)	(0.43,1.299)	(0.428,1.3)	(0.426,1.301)	(0.424,1.302)	(0.411,1.308)	(0.4,1.313)	(0.389,1.318)	(0.382,1.321)	(0.373,1.325)	(0.366,1.328)	(0.359,1.331)	(0.354,1.333)	(0.347,1.336)	(0.342,1.338)	(0.337,1.34)	(0.332,1.342)	(0.327,1.344)	(0.322,1.346)	(0.317,1.348)	(0.312,1.35)	(0.304,1.353)	(0.296,1.356)	(0.285,1.36)	(0.282,1.361)	(0.279,1.362)	(0.268,1.366)	(0.265,1.367)	(0.262,1.368)	(0.259,1.369)	(0.256,1.37)	(0.253,1.371)	(0.243,1.374)	(0.24,1.375)	(0.237,1.376)	(0.234,1.377)	(0.224,1.38)	(0.217,1.382)	(0.21,1.384)	(0.199,1.387)	(0.195,1.388)	(0.191,1.389)	(0.187,1.39)	(0.183,1.391)	(0.179,1.392)	(0.175,1.393)	(0.166,1.395)	(0.162,1.396)	(0.157,1.397)	(0.152,1.398)	(0.147,1.399)	(0.142,1.4)	(0.137,1.401)	(0.132,1.402)	(0.126,1.403)	(0.12,1.404)	(0.114,1.405)	(0.108,1.406)	(0.101,1.407)	(0.094,1.408)	(0.086,1.409)	(0.077,1.41)	(0.067,1.411)	(0.056,1.412)	(0.041,1.413)	(0.04,1.413)	(0.039,1.413)	(0.038,1.413)	(0.037,1.413)	(0.017,1.414)	(0.016,1.414)	(0.015,1.414)	(0.014,1.414)	(0.013,1.414)	(0.012,1.414)	(0.011,1.414)	(0.01,1.414)	(0.009,1.414)	(0.008,1.414)	(0.007,1.414)	(0.006,1.414)	(0.005,1.414)	(0.004,1.414)	(0.003,1.414)	(0.002,1.414)	(0.001,1.414)	(0,1.414)	(0,1)	(1,1)};	
    \addplot[ fill=violet, 
                    fill opacity=0.2, color=violet]
    coordinates {(1,1)	(1.414,1.414)	(1.414,3)	(1,3)	(1,1)};		
    \addplot[ fill=violet, 
                    fill opacity=0.2, color=violet]
    coordinates {(1,1)	(3,1)	(3,1.414)	(1.414,1.414)	(1,1)};
    
     \node[black] at (axis cs:0.25,0.25){${\cT}1_1$}; 
  \node[black] at (axis cs:2,0.25){${\cT}1_2$}; 
   \node[black] at (axis cs:0.25,2){${\cT}1_3$}; 
    \node[black] at (axis cs:2.25,2.25){${\cT}1_4$}; 
    \node[black] at (axis cs:1.2,0.25){${\cT}2_1$};
    \node[black] at (axis cs:0.25,1.2){${\cT}2_2$};
    \node[black] at (axis cs:1.2,2){${\cT}2_3$};
    \node[black] at (axis cs:2,1.2){${\cT}2_4$};
     \end{axis}
     \end{tikzpicture}
}
\caption{\label{fig1} Channel conditions where sum capacity is obtained for the $2$-user GIC, $P_1=P_2=1$.}
\label{fig:2-user}
\end{figure}
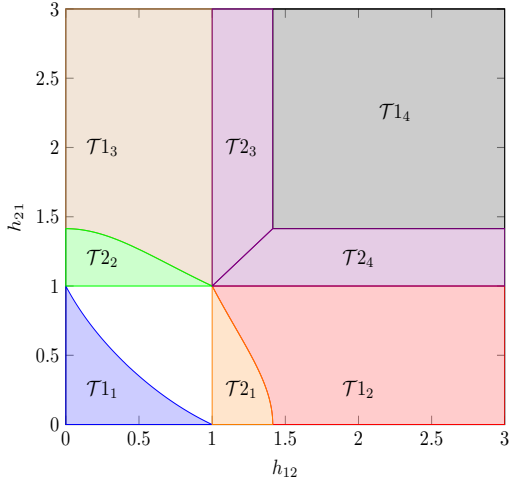

In Fig. \ref{fig:2-user}, $\mathcal{T}1_i$ denotes the region given by scheme $i$ (corresponds to $i^{th}$ row in the Table) in Table \ref{table:shk2user1} and $\mathcal{T}2_i$ denotes the region given by scheme $i$ in Table \ref{table:shk2user2}.

\begin{table*}
\centering
\begin{tabular}{|p{1.4cm} | p{2.3cm} | p{2.7cm}| p{1cm}|} 
  \hline
  S-HK scheme & Optimality conditions & Sum capacity & Matches \\ \hline
 $I(1)=\{\}$ & $h_{21} \leq 1, h_{12} \geq 1$, & $\frac{1}{2} \log(1+P_1+h_{12}^2 P_2)$ & Thm. 10 in \cite{motkha09} \\
 $I(2)=\{1\}$ & $h_{12}^2 \leq \frac{1+P_1}{1+h_{21}^2 P_1}$  &  & \\
 \hline
 $I(1)=\{2\}$ & $h_{12} \leq 1, h_{21} \geq 1 $, & $\frac{1}{2} \log(1+P_2+h_{21}^2 P_1)$ & Thm. 10 in \cite{motkha09}\\
 $I(2)=\{\}$ &  $h_{21}^2 \leq \frac{1+P_2}{1+h_{12}^2 P_2}$ &  &\\ 
 \hline
 $I(1)=\{\}$ & $1 \leq h_{12}^2 \leq 1+P_1 $, & $\frac{1}{2} \log(1+P_1+h_{12}^2P_2)$ & \\
 $I(2)=\{\}$ & $P_1+h_{12}^2P_2 \leq $ & &\\
 & ~~~~~~~$P_2+h_{21}^2P_1$ & & \cite{Sat81}\\
  \cline{2-3}
& $1 \leq h_{21}^2 \leq 1+P_2 $, & $\frac{1}{2} \log(1+P_2+h_{21}^2P_1)$ & \\
 & $P_2+h_{21}^2P_1 \leq $ & & \\
  & ~~~~~~~$P_1+h_{12}^2P_2$ & &\\
 \hline
\end{tabular}
\caption{Optimality conditions and sum capacity of S-HK schemes for a 2 user GIC using Theorems \ref{thm:achGIC}, \ref{thm:convjoint}.}
\label{table:shk2user2}
\end{table*}

\section{Partially connected GIC results}
\label{app:partialres}
We specialize our sum capacity results to cyclic, cascade, and many-to-one GICs, which are special cases of GIC. For cyclic channels, we specialize the two channel conditions for general S-HK schemes to get two new sum capacity results. For 3 user cascade channels, sum capacity results were derived in \cite{liuerk11}, but we derive sum capacity results for a general $K$ user cascade channels. For the many-to-one and one-to-many GICs, we can recover the results derived in \cite{GnaChaBha17}.

\subsection{Cyclic GIC}
\label{app:cyclic}
We use the following channel model for cyclic GIC
\[y_k= x_k+ h_{k+1} x_{k+1}+z_k, \forall k \in[K]\]
Where modulo $K$ is assumed over the indices.

\begin{result}
For a cyclic GIC, satisfying the following conditions for some sets $ I_1, D_1 \subseteq [K]$ and $I_1 \cup D_1= [K]$
\begin{align} 
\frac{h_{i+1}^2(1+Q_{i+1})^2}{\rho_{i+1}^2}  & \leq 1-\rho_i^2, \ \forall i \in I_1 \label{eqn:convcyc}\\
h_{i+1}^2 (1+Q_{i+1}) & \geq 1+P_i , \forall i \in D_1 \label{eqn:achcyc}
\end{align}
where 
\[Q_{i} = \left\{
  \begin{array}{lr}
    h_{i+1}^2P_{i+1} & : i \in I_1\\
    0 & : \mbox{else}
  \end{array}
\right.
\]
the sum capacity is given by
\begin{equation}
C_{sum} = \underset{i=1}{\overset{K}{\sum}} \frac{1}{2} \log \left[1+\frac{P_i}{1+Q_i} \right] \label{eqn:sumcapacitycyc}
\end{equation}
and the sum capacity is achieved by S-HK scheme defined by $I(i)=\phi , \ \forall i \in D_1$ and $I(i)=\{i+1 \},\  \forall i \in I_1$.
\end{result}
\begin{proof}
Use Theorem \ref{thm:covgen}.
\end{proof}

\begin{corollary} \label{cor:achcyclic}
For the cyclic channel, if we treat interference as noise at receivers $i \in I_1$ and decode interference at receivers $i \in D_1= [K] \backslash I_1$, then the achievable sum rates  given by
\begin{align}
S & \leq \frac{1}{2}\sum_{j \in \mathcal{J}_1} \log (1+P_j+h_{j+1}^2 P_{j+1}) \nonumber\\
&~ ~ ~ +\frac{1}{2} \sum_{j \in \mathcal{J}_2} \log (1+c_j P_j) \nonumber\\
& \forall  \mathcal{J}_1 \subseteq D_1, \mbox{such that if}  \ i \in  \mathcal{J}_1, \ \mbox{then}, \ i+1 \notin \mathcal{J}_1 \nonumber\\
&  \mathcal{J}_2= [K] \backslash \{i,i+1: i \in \mathcal{J}_1   \} \label{eqn:sumrateachcyclic1}\\
2S & \leq \frac{1}{2} \sum_{j=1}^K \log  (1+P_j+h_{j+1}^2 P_{j+1}), \mbox{if} \ D_1=[K] \label{eqn:sumrateachcyclic2}
\end{align}
where,  for all  $i=1,2,\cdots,K$
\[c_i = \left\{
  \begin{array}{lr}
  \min \left\{h_i^2, \frac{1}{1+h_{i+1}^2 P_{i+1}} \right\}     & : i-1 \in D_1,\ i \in I_1\\
  \min \{h_i^2, 1 \}     & : i-1 \in D_1,\ i \in D_1\\
     \frac{1}{1+h_{i+1}^2 P_{i+1}}  & : i-1 \in I_1,\ i \in I_1\\
      1   & : i-1 \in I_1,\ i \in D_1
  \end{array}
\right.
\]
are achievable.
\end{corollary}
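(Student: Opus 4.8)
The plan is to specialize Theorem \ref{thm:achGIC} to the cyclic channel, where the only nonzero cross-link at receiver $i$ is from transmitter $i+1$ with gain $h_{i+1}$. Consequently the effective MAC at each receiver contains only users $i$ and $i+1$: for $i\in D_1$ (so $I(i)=\phi$, $Q_i=0$) this is the two-user MAC on $\{i,i+1\}$ with noise variance $1$, while for $i\in I_1$ (so $I(i)=\{i+1\}$, $Q_i=h_{i+1}^2P_{i+1}$) only user $i$ is decoded and $x_{i+1}$ is treated as noise. Every other term $h_{ij}^2P_j$ in the bound of Theorem \ref{thm:achGIC} vanishes, so each $\mathcal{J}_i$ effectively reduces to a subset of $\{i,i+1\}$, and to a subset of $\{i\}$ when $i\in I_1$.

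For the single-copy bound (the case $l=1$ in Theorem \ref{thm:achGIC}) I would read the covering condition $\bigcup_i\mathcal{J}_i=\mathcal{S}_1=[K]$ as a partition in which each user is decoded exactly once. A pair $\{i,i+1\}$ with $i\in D_1$ may be jointly decoded at receiver $i$, contributing $\tfrac12\log(1+P_i+h_{i+1}^2P_{i+1})$; collecting these pair-receivers into $\mathcal{J}_1\subseteq D_1$ produces the first sum in (\ref{eqn:sumrateachcyclic1}). The rule that $i\in\mathcal{J}_1$ forbids $i+1\in\mathcal{J}_1$ is exactly what stops user $i+1$ from being covered twice, namely once as the interferer at receiver $i$ and once as the desired user at receiver $i+1$.

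Every remaining user $j\in\mathcal{J}_2=[K]\setminus\{i,i+1:i\in\mathcal{J}_1\}$ is decoded singly and, in Theorem \ref{thm:achGIC}, may be assigned either to its own receiver $j$ (effective gain $1/(1+Q_j)$) or, when $j-1\in D_1$, to receiver $j-1$ as a decodable interferer (gain $h_j^2$, since $Q_{j-1}=0$). Since the achievable region is the intersection of all the Theorem \ref{thm:achGIC} bounds, the binding constraint for a given $\mathcal{J}_1$ uses the smaller of these two single-user terms, which is precisely the minimum defining $c_j$; evaluating this for the four combinations of $j-1,j\in D_1$ or $I_1$ reproduces the piecewise definition, and the second sum in (\ref{eqn:sumrateachcyclic1}) follows. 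The bound (\ref{eqn:sumrateachcyclic2}) handles the remaining case $D_1=[K]$, where every receiver decodes its pair; here I would invoke the $l=2$ instance of Theorem \ref{thm:achGIC} and sum the pairwise constraints $R_i+R_{i+1}\le\tfrac12\log(1+P_i+h_{i+1}^2P_{i+1})$ over all $i\in[K]$, in which each index appears exactly twice (once as a desired signal, once as a decoded interferer at the preceding receiver), giving $\sum_i(R_i+R_{i+1})=2S$ on the left and the stated right-hand side.

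The main obstacle is not any single inequality but the combinatorial bookkeeping of the Fourier--Motzkin elimination: one must verify that the admissible single-cover assignments of $[K]$ are in bijection with the non-adjacent sets $\mathcal{J}_1\subseteq D_1$, and that for each singly-decoded user the minimization over the at most two receivers where it is decodable yields exactly the four-case formula for $c_j$. Confirming these four cases and checking that no alternative cover produces a tighter sum-rate constraint is the principal verification required.
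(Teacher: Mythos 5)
Your proposal is correct and takes exactly the same route as the paper: the paper's entire proof of this corollary is the single instruction ``Use Theorem \ref{thm:achGIC},'' and your specialization of its MAC constraints and Fourier--Motzkin covers to the cyclic topology (each $\mathcal{J}_i$ collapsing to a subset of $\{i,i+1\}$, non-adjacent joint-decoding sets $\mathcal{J}_1$, the four-case minimum defining $c_j$, and the $l=2$ all-pairs cover when $D_1=[K]$) is precisely the unpacking that instruction presupposes. The bookkeeping you flag as the remaining verification is straightforward and your sketch of it is sound, so no gap remains.
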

\begin{proof}
Use Theorem \ref{thm:achGIC}.
\end{proof}

\begin{result}
For the cyclic GIC, let $I_1, D_1 \subseteq [K]$ such that $I_1 \cup D_1= [K]$ and let some  $\{k \} \in D_1$ , if the channel satisfies the following conditions 
\begin{IEEEeqnarray}{lcr}
 \frac{h_{i+1}^2(1+Q_{i+1})^2}{\rho_{i+1}^2}   \leq 1-\rho_i^2, \ \forall i \in I_1 ,\label{eqn:convcyclic1}\\
h_{k+1} \geq 1 \label{eqn:convcyclic2}\\
(1+P_{k}+h_{k+1}^2 P_{k+1}) \prod_{\underset{j \neq \{k.k+1\}}{j=1}}^{K} \left(1+\frac{P_j}{1+Q_j}\right) \leq  \nonumber\\
~ ~ \prod_{j \in \mathcal{J}_1} (1+P_j+h_{j+1}^2 P_{j+1}) \prod_{j \in \mathcal{J}_2}(1+c_j P_j), \nonumber\\
  \forall  \mathcal{J}_1 \subseteq D_1, \mbox{such that if}  \ i \in  \mathcal{J}_1,\ \mbox{then}, \ i+1 \notin \mathcal{J}_1, \nonumber\\
 \mathcal{J}_2= [K] \backslash \{i,i+1: i \in \mathcal{J}_1   \} \label{eqn:achcyclic}\\
 (1+P_{k}+h_{k+1}^2 P_{k+1}) \prod_{\underset{j \neq \{k.k+1\}}{j=1}}^{K} \left(1+\frac{P_j}{1+Q_j}\right) \leq  \nonumber\\
~ ~ \prod_{j \in [K]} \frac{1}{2} (1+P_j+h_{j+1}^2 P_{j+1})   \ \mbox{if} \ D_1=[K]\label{eqn:achcyclic1}
\end{IEEEeqnarray}
where 
\[Q_{i} = \left\{
  \begin{array}{lr}
    h_{i+1}^2P_{i+1} & : i \in I_1\\
    0 & : \mbox{else}
  \end{array}
\right.
\]
the sum capacity is given by
\begin{IEEEeqnarray}{lcr}
C_{sum}= \frac{1}{2} \log (1+P_{k}+h_{k+1}^2 P_{k+1}) \nonumber\\
~~~~~~~~+ \sum_{\underset{j \neq \{k,k+1\}}{j=1}}^{K} \frac{1}{2} \log (1+\frac{P_j}{1+Q_j}) \nonumber
\end{IEEEeqnarray}
where $c_i, \forall i \in [K]$ is defined as in corollary \ref{cor:achcyclic}.
\end{result}
\begin{proof}
Use Theorem \ref{thm:convjoint}, to get the converse conditions (\ref{eqn:convcyclic1}), (\ref{eqn:convcyclic2}) and use corollary \ref{cor:achcyclic}, to get the achievability conditions (\ref{eqn:achcyclic}).
\end{proof}

\subsection{Cascade GIC}
\label{app:cascade}
We use the following channel model for cascade GIC
\begin{align*}
y_k &= x_k+ h_{k+1} x_{k+1}+z_k, \forall k \in \{1,2,\cdots,K-1\} \\
y_K &=x_K+z_K
\end{align*}

\begin{result}
For the cascade GIC, satisfying the following conditions for some sets $ I_1, D_1 \subseteq [K]$ and $I_1 \cup D_1 \cup \{K\}= [K]$ and $\{K\} \notin  I_1, D_1 $
\begin{align} 
\frac{h_{i+1}^2(1+Q_{i+1})^2}{\rho_{i+1}^2}  & \leq 1-\rho_i^2, \ \forall i \in I_1 \label{eqn:convcyc}\\
h_{i+1}^2 (1+Q_{i+1}) & \geq 1+P_i , \forall i \in D_1 \label{eqn:achcyc}
\end{align}
where 
\[Q_{i} = \left\{
  \begin{array}{lr}
    h_{i+1}^2P_{i+1} & : i \in I_1\\
    0 & : \mbox{else}
  \end{array}
\right.
\]
the sum capacity is given by
\begin{equation}
C_{sum} = \underset{i=1}{\overset{K}{\sum}} \frac{1}{2} \log \left[1+\frac{P_i}{1+Q_i} \right] \label{eqn:sumcapacitycyc}
\end{equation}
\end{result}
\begin{proof}
We get result by taking $I(i)=\phi , \ \forall i \in I_1$ and $I(i)=\{i+1 \},\  \forall i \in D_1$ in Theorem \ref{thm:covgen}.
\end{proof}

\begin{corollary} \label{cor:achcascade}
For the cascade channel, if we treat interference as noise at receivers $i \in I_1$ and decode interference at receivers $i \in D_1= \{1,2,\cdots,K-1\} \backslash I_1$, then the sum rates given by
\begin{align}
S & \leq \frac{1}{2} \sum_{j \in \mathcal{J}_1} \log (1+P_j+h_{j+1}^2 P_{j+1}) \\
& +\frac{1}{2} \sum_{j \in \mathcal{J}_2} \log (1+e_j P_j) \nonumber\\
& \forall  \mathcal{J}_1 \subseteq D_1, \mbox{such that if}  \ i \in  \mathcal{J}_1, \ \mbox{then}, \ i+1 \notin \mathcal{J}_1 \nonumber\\
&  \mathcal{J}_2= [K] \backslash \{i,i+1: i \in \mathcal{J}_1   \} \label{eqn:sumrateachcascade}
\end{align}
where,  for all  $i=1,2,\cdots,K-1$
\[e_i = \left\{
  \begin{array}{lr}
  \min \left\{h_i^2, \frac{1}{1+h_{i+1}^2 P_{i+1}} \right\}     & : i-1 \in D_1,\ i \in I_1\\
  \min \{h_i^2, 1 \}     & : i-1 \in D_1,\ i \in D_1\\
     \frac{1}{1+h_{i+1}^2 P_{i+1}}  & : i-1 \notin D_1,\ i \in I_1\\
      1   & : i-1 \notin D_1,\ i \in D_1
  \end{array}
\right.
\]
and $e_K=1$\\
are achievable.
\end{corollary}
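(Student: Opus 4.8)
The plan is to specialize Theorem~\ref{thm:achGIC} to the sparse structure of the cascade channel, in which the only nonzero cross gain seen by receiver $i$ is $h_{i+1}$ from transmitter $i+1$. First I would write out, for the scheme in question ($I(i)=\{i+1\}$ for $i\in I_1$, $I(i)=\phi$ for $i\in D_1$, and $I(K)=\phi$), the per-receiver MAC rate region guaranteed by Theorem~\ref{thm:achGIC}. Because $h_{ij}=0$ unless $j\in\{i,i+1\}$, every nontrivial constraint collapses to one of three types: at a receiver $i\in I_1$ the single-user bound $R_i\le\frac12\log(1+\frac{P_i}{1+Q_i})$ with $Q_i=h_{i+1}^2P_{i+1}$; at a receiver $i\in D_1$ (where $Q_i=0$) the two single-user bounds $R_i\le\frac12\log(1+P_i)$ and $R_{i+1}\le\frac12\log(1+h_{i+1}^2P_{i+1})$ together with the sum bound $R_i+R_{i+1}\le\frac12\log(1+P_i+h_{i+1}^2P_{i+1})$; and at receiver $K$ the lone bound $R_K\le\frac12\log(1+P_K)$.

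The second step is to collapse the single-user bounds on each $R_j$ into one coefficient $e_j$. Transmitter $j$ is always decoded at its own receiver $j$, contributing coefficient $1/(1+Q_j)$, which equals $\frac{1}{1+h_{j+1}^2P_{j+1}}$ when $j\in I_1$ and $1$ when $j\in D_1$; in addition, when $j-1\in D_1$ receiver $j-1$ is forced to decode transmitter $j$ as interference, contributing coefficient $h_j^2$. Taking the smaller of the active coefficients gives $R_j\le\frac12\log(1+e_jP_j)$, and the four combinations of ($j-1\in D_1$ or not) with ($j\in I_1$ or $j\in D_1$) reproduce exactly the four cases defining $e_j$. The boundary index $j=1$ falls under ``$j-1\notin D_1$'', and transmitter $K$ is decoded at receiver $K$ only, giving $e_K=1$.

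The third step is the combinatorial assembly that realizes the Fourier--Motzkin step cited in Theorem~\ref{thm:achGIC}. For any $\mathcal{J}_1\subseteq D_1$ with no two consecutive indices, the user sets $\{i,i+1\}$ over $i\in\mathcal{J}_1$ are pairwise disjoint, so adding the sum bounds over $i\in\mathcal{J}_1$ to the single-user bounds $R_j\le\frac12\log(1+e_jP_j)$ over the complementary users $\mathcal{J}_2=[K]\setminus\{i,i+1:i\in\mathcal{J}_1\}$ covers every transmitter exactly once. The left-hand sides then sum to $S$, producing precisely the claimed inequality. Unlike the cyclic case of Corollary~\ref{cor:achcyclic}, the chain is open at receiver $K$, so there is no wrap-around coupling and no analogue of the second ($l=2$) bound arises.

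I expect the main obstacle to be the bookkeeping in the last two steps rather than any individual inequality: one must check that the no-two-consecutive restriction on $\mathcal{J}_1$ is exactly what prevents any transmitter from being double-counted, and one must correctly identify which single-user constraints are simultaneously active so that the minima in $e_j$, together with the end-of-chain value $e_K=1$, come out right. Once the reduction of the general MAC constraints to this single-user/pairwise form is in hand, the summation itself is routine.
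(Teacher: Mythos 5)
Your overall plan---specializing Theorem~\ref{thm:achGIC} to the cascade topology, collapsing the per-receiver MAC constraints into single-user coefficients, and assembling the sum-rate bounds from disjoint pairs plus singletons---is exactly the route the paper intends (its own justification is the same one-line appeal to Theorem~\ref{thm:achGIC} that it makes explicit for Corollary~\ref{cor:achcyclic}). The gap is in your treatment of the end of the chain. You justify $e_K=1$ by asserting that ``transmitter $K$ is decoded at receiver $K$ only,'' but this is false whenever $K-1\in D_1$: in the cascade model the interference at receiver $K-1$ is precisely $h_Kx_K$, so decoding interference at receiver $K-1$ means jointly decoding message $K$ there. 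By your own min-of-active-coefficients logic this contributes the coefficient $h_K^2$ and yields $e_K=\min\{h_K^2,1\}$, not $1$; note that the paper's cyclic corollary applies exactly this convention to every index, so the unconditional $e_K=1$ is the one point that genuinely needs an argument, and your proposal supplies a false one.

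The discrepancy is not cosmetic, because the domination direction of the Fourier--Motzkin step (every bound in (\ref{eqn:achsum-rategenIC}) must be implied by some listed bound) fails with $e_K=1$. Take $K=3$, $I_1=\phi$, $D_1=\{1,2\}$, $P_1=P_2=1$, $P_3=100$, $h_2^2=0.4$, $h_3^2=0.01$. Theorem~\ref{thm:achGIC} with $l=1$, $\mathcal{J}_1=\{1,2\}$, $\mathcal{J}_2=\{3\}$, $\mathcal{J}_3=\phi$ forces every achievable sum rate of this scheme (as analyzed there, via full MAC regions) to satisfy $S\le\frac12\log(2.4)+\frac12\log(2)=\frac12\log(4.8)$, whereas the minimum of the corollary's listed bounds is $\frac12\log(2)+\frac12\log(3)=\frac12\log(6)>\frac12\log(4.8)$. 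So the region you assemble in your third step is strictly larger than anything Theorem~\ref{thm:achGIC} guarantees, and no bookkeeping can fix that. The statement can still be rescued, but only by an ingredient absent from both your proposal and Theorem~\ref{thm:achGIC}: receiver $K-1$ does not actually need message $K$ to be decoded correctly, and with non-unique (indirect) decoding the individual constraint $R_K\le\frac12\log(1+h_K^2P_K)$ at receiver $K-1$ disappears, which is exactly what legitimizes $e_K=1$. As written, your argument proves the weaker corollary with $e_K=\min\{h_K^2,1\}$ when $K-1\in D_1$; to prove the corollary as stated you must either add that non-unique-decoding argument or restrict to $K-1\in I_1$.
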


\begin{result}
For the cascade GIC, treating interference as noise at receivers $i \in I_1$ and decoding interference at receivers $i \in D_1$ (assuming $\{K-1\} \in D_1$) is optimal if the channel satisfies the following conditions 
\begin{IEEEeqnarray}{lcr}
 \frac{h_{i+1}^2(1+Q_{i+1})^2}{\rho_{i+1}^2}   \leq 1-\rho_i^2, \ \forall i \in I_1 ,\label{eqn:convcascade1}\\
    h_K  \geq 1 \label{eqn:convcascade2}\\
(1+P_{K-1}+h_{K}^2 P_{k}) \prod_{j=1}^{K-2} \left(1+\frac{P_j}{1+Q_j}\right) \leq  \nonumber\\
~ ~ \prod_{j \in \mathcal{J}_1} (1+P_j+h_{j+1}^2 P_{j+1}) \prod_{j \in \mathcal{J}_2}(1+e_j P_j), \nonumber\\
  \forall  \mathcal{J}_1 \subseteq D_1, \mbox{such that if}  \ i \in  \mathcal{J}_1,\ \mbox{then}, \ i+1 \notin \mathcal{J}_1, \nonumber\\
 \mathcal{J}_2= [K] \backslash \{i,i+1: i \in \mathcal{J}_1   \} \label{eqn:achcascade}
\end{IEEEeqnarray}
where 
\[Q_{i} = \left\{
  \begin{array}{lr}
    h_{i+1}^2P_{i+1} & : i \in I_1\\
    0 & : \mbox{else}
  \end{array}
\right.
\]
and the sum capacity is given by
\begin{IEEEeqnarray}{lcr}
C_{sum}= \frac{1}{2} \log (1+P_{K-1}+h_{K}^2 P_{K}) \nonumber\\
~~~~~~~~~~+ \sum_{j=1}^{K-2} \frac{1}{2} \log (1+\frac{P_j}{1+Q_j}) 
\end{IEEEeqnarray}
where $e_i, \forall i \in [K]$ is defined as in corollary \ref{cor:achcascade}.
\end{result}
\begin{proof}
Use Theorem \ref{thm:convjoint}, to get the converse conditions (\ref{eqn:convcascade1}), (\ref{eqn:convcascade2}) and use corollary \ref{cor:achcascade}, to get the achievability conditions (\ref{eqn:achcascade}).
\end{proof}

\subsection{Many-to-one GIC}
\label{app:many21}
 Channel model for many-to-one IC is given by
 \begin{align}
y_1 &=x_1+ \sum_{j=2}^K h_i x_i+z_i \nonumber\\
y_i& =x_i+z_i, \ \forall i=2,3,\cdots,K \label{mod:manytoone}
\end{align} 

\begin{result}
For a many-to-one channel, satisfying the following conditions 
\begin{IEEEeqnarray}{lcr}
\sum_{j=k+1}^{K} h_j^2 \leq 1 \label{eqn:convmanyone}\\
\underset{i \in \cB-\cN}\prod (1+P_{i}).(1+\underset{j=k+1}{\overset{K}\sum} {h_{j}^2}P_{j} + P_1) \leq \nonumber\\
~1+\underset{i\in \cB-\cN}\sum h_{i}^2 P_{i} + P_{1}
, \forall \cN \subset \cB, \cN \ne \cB,\label{eqn:achmanyone}
\end{IEEEeqnarray}
where $\cB$ = $\{2,3,\hdots,k\}$ , $k \in \{1,2,..,K\}$, the sum capacity is given by 
then the sum capacity is given by 
\begin{IEEEeqnarray}{lcr}
C_{sum} = \frac{1}{2} \log \left( 1+ \frac{P_1}{1+ \sum_{j=k+1}^{K} h_j^2 P_j}\right)\nonumber\\
 ~~~~~~~ + \sum_{i=2}^K \frac{1}{2} \log(1+P_i) \label{eqn:sumratemanyone}
\end{IEEEeqnarray}
\end{result} 
\begin{proof}
 From Theorem \ref{thm:covgen}, taking $I(1)=\{k+1,\cdots,K \}$, we get the required sum capacity if the channel satisfies  the conditions (\ref{eqn:convmanyone}) and also
\begin{equation*}
\sum_{j=k+1}^{K} \frac{h_j^2}{\rho_j^2} \leq 1-\rho_1^2
\end{equation*}
for some $\rho_i \in [0,1], \ i=1,2,\cdots,K$. \\
Choose $\rho_1=0$ and $\rho_j=1 \ \forall j =k+1,\cdots,K$ to get the condition (\ref{eqn:convmanyone}). 
\end{proof}

\begin{result}
For the K-user Gaussian many-to-one IC satisfying the following channel conditions:
\begin{IEEEeqnarray}{lcr}
\underset{i\in \cN}\prod (1+P_{i})\left( 1+P_{1}+\underset{i=k+1}{\overset{K}\sum} h_{i}^2 P_{i}+\underset{i\in \cB-\cN}\sum h_{i}^2 P_{i}\right) \nonumber\\  
\geq \label{eqn:achmik1}\prod_{i=2}^{k-1} (1+P_{i})(1+P_{1}+\sum_{j=k}^{K}h_{j}^2 P_{j})\\
\forall {\cN \subseteq \cB}, {\cN} \neq \{2,3,..,k-1\}\ and\ {\cB} = \{2,3,...k\} \nonumber  \\	
\underset{i=k+1}{\overset{K}\sum} h_{i}^2 \leq 1- \rho^2, \mbox {~~} \rho h_k = 1 + \underset{i=k+1}{\overset{K}\sum}h_{i}^2 P_{i} \label{eqn:convmik1}
\end{IEEEeqnarray}
the sum capacity is given by
\begin{IEEEeqnarray}{lcr}\label{eqn:sumcapmik1}
C_{sum} = \underset{\underset{i\neq k}{i=2}}{\overset{K}\sum} 
\frac{1}{2} \log(1+P_{i}) \nonumber\\
~~~~~~~+ \frac{1}{2} \log\Vast(1+ \frac{P_{1}+h_{k}^2P_{k}}
{1+\underset{i=k+1}{\overset{K}\sum}h_{i}^2P_{i}}\Vast)
\end{IEEEeqnarray}
\end{result}
\begin{proof}
(Converse) From Theorem \ref{thm:convjoint}, taking $G(i)=\{k+1,\cdots,K\}$ , we get the required outer bound when (\ref{eqn:convmik1}) is satisfied. \\
(Achievability) Using theorem \ref{thm:achGIC} with $I(i)=\{k+1,\cdots,K\}$,
 we get the following achievable sum rates 
\begin{multline}\label{eqn:manyoneachsumrates}
S \leq \frac{1}{2}\underset{i=k+1}{\overset{K}\sum} \log(1 + P_{i}) +\frac{1}{2} \underset{i \in \cM}\sum \log(1+m_i P_{i}) \\+ \frac{1}{2} \ \log\left(1 + \frac{P_{1} + \underset{i \in \cB-\cM}\sum h_{i}^2P_{i}}{1+ \underset{i=k+1}{\overset{K}\sum} h_{i}^2 P_{i} }\right) , \forall \cM\subseteq \cB. 
\end{multline}
where $\mathcal{B} =\{2,3,\cdots,k\}$ and \\
$m_i=\min \left\{1, \frac{h_i^2}{1+ \underset{j=k+1}{\overset{K}{\sum}}h_j^2 P_j}  \right \}$.\\
Among these sum rates, we want the sum rate with  $\mathcal{M}=\mathcal{B} \backslash \{k\}$, and $m_i=1$, $\forall i \in \mathcal{B} \backslash \{k\} $ to be dominant. We get the conditions (\ref{eqn:achmik1}), for the sum rate with $\mathcal{M}=\mathcal{B} \backslash \{k\}$ to be dominant assuming $m_i=1, \forall i \in \{2,3,\cdots,k\}$. Given the converse conditions (\ref{eqn:convmik1})and (\ref{eqn:achmik1}), conditions with $m_1=\frac{h_i^2}{1+ \underset{j=k+1}{\overset{K}{\sum}}h_j^2 P_j}$ are always redundant.
\end{proof}

\section{Examples}
\label{app:examples}
In this section, we will give some examples of finding the two set of channel conditions under sum capacity is achieved by a S-HK schemes using Theorem \ref{thm:covgen}, \ref{thm:achGIC}, \ref{thm:convjoint}.

\begin{example}
In this example, using Theorem \ref{thm:covgen} we will find the first set of channel conditions under which sum capacity is achieved by a S-HK scheme. Consider a 3-user GIC with S-HK scheme given by  $I(1)=\{2\}$, $I(2)=\{3\}$, $I(3)=\{\}$. Inequalities (\ref{eqn:fin1}), (\ref{eqn:fin2}) gives the same set of conditions given by
\begin{align}
h_{12}^2(1+h_{23}^2P_3) & \leq \rho_2^2(1-\rho_1^2) \label{eqn:exm1cond1}\\
h_{23}^2 & \leq (1-\rho_2^2)
\end{align}
for some $\rho_2, \rho_1 \in (0,1)$. In inequality (\ref{eqn:fin3}), for $i=1$, $\mathcal{J}=\{3\}$; for $i=2$, $\mathcal{J}=\{1\}$; and for $i=3$, $\mathcal{J}$ can be $\{1\}$, $\{2\}$, and $\{1,2\}$. Therefore,  (\ref{eqn:fin3}) gives the set of conditions
\begin{align}
1+P_1+h_{12}^2P_2 & \leq h_{13}^2\\
1+P_2+h_{23}^2P_3 & \leq h_{21}^2(1+h_{12}^2P_2)\\
(1+P_3) & \leq h_{31}^2(1+h_{12}^2P_2)\\
(1+P_3) & \leq h_{32}^2(1+h_{23}^2 P_3)\\
(1+\frac{P_1}{1+Q_1})(1+\frac{P_2}{1+Q_2})  & \leq \left(1+\frac{h_{31}^2P_1+h_{32}^2P_2}{1+P_3} \right) \label{eqn:exm1cond2}
\end{align} 
where $Q_1=h_{12}^2P_2$, $Q_2=h_{23}^2P_3$, $Q_3=0$. Under conditions (\ref{eqn:exm1cond1})-(\ref{eqn:exm1cond2}), sum capacity is achieved by S-HK scheme with $I(1)=\{2\}$, $I(2)=\{3\}$, $I(3)=\{\}$ and the sum capacity is given by
\begin{equation*}
C_{sum} = \sum_{i=1}^{3} \frac{1}{2} \log \left[1+\frac{P_i}{1+Q_i} \right].
\end{equation*}
\end{example}

\begin{example}
In this example, we will find achievable sum rates in (\ref{eqn:achsum-rategenIC}) for a S-HK scheme. Consider a 3-user GIC with S-HK scheme given by $I(1)=\{\}$, $I(2)=\{1,3\}$, $I(3)=\{2\}$ which implies $D(1)=\{2,3\}$, $D(2)=\{\}$, $D(3)=\{1\}$. From Theorem \ref{thm:achGIC}, $l$ can be 1,2,3 and $\mathcal{J}_1 \subseteq \{1,2,3\}$, $i.e.$, $\mathcal{J}_1 $ can be \{\}, \{1\}, \{2\}, \{3\}, \{1,2\},\{1,3\},\{2,3\}, \{1,2,3\}. $\mathcal{J}_2 \subseteq \{2\}$, $\mathcal{J}_3 \subseteq \{1,3\}$. For $l=1,2,3$, the possible sets of $\mathcal{J}_1$, $\mathcal{J}_2$, $\mathcal{J}_3$ such that $\underset{i \in {[K]}}{\bigcup} \mathcal{J}_i= {\cal S}_l$ are given in table \ref{table:ex2}.

\begin{table}
\centering
\begin{tabular}{|p{1cm} | p{1cm} | p{1cm}| p{1cm}|} 
  \hline
  $l$ & $\mathcal{J}_1$ & $\mathcal{J}_2$ & $\mathcal{J}_3$ \\ \hline
 $l=1$ & $\{\}$ & $\{2\}$ & $\{1,3 \}$ \\
 \cline{2-4}
 & $\{1\}$ & $\{2\}$ & $\{3 \}$ \\
 \cline{2-4}
 & $\{2\}$ & $\{\}$ & $\{1,3 \}$ \\
 \cline{2-4}
 & $\{3\}$ & $\{2\}$ & $\{1 \}$ \\
 \cline{2-4}
 & $\{1,2\}$ & $\{\}$ & $\{3 \}$ \\
 \cline{2-4}
 & $\{1,3\}$ & $\{2\}$ & $\{ \}$ \\
 \cline{2-4}
 & $\{2,3\}$ & $\{\}$ & $\{1 \}$ \\
 \cline{2-4}
 & $\{1,2,3\}$ & $\{\}$ & $\{ \}$ \\
 \hline
  $l=2$ & $\{1,2,3\}$ & $\{2\}$ & $\{1,3 \}$ \\
  \hline
\end{tabular}
\caption{Set of $\mathcal{J}_i$ such that $\underset{i \in {[K]}}{\bigcup} \mathcal{J}_i= {\cal S}_l$ for S-HK with $I(1)=\{\}$, $I(2)=\{1,3\}$, $I(3)=\{2\}$.}
\label{table:ex2}
\end{table}
Achievable sum rates are given by
\begin{align*}
S & \leq \frac{1}{2} \log \left[1+\frac{P_2}{1+Q_2} \right] +\frac{1}{2} \log \left[1+\frac{h_{31}^2P_1+P_3}{1+Q_3} \right]\\
S & \leq \frac{1}{2} \log (1+P_1)+\frac{1}{2} \log \left[1+\frac{P_2}{1+Q_2} \right] \\
&~~~ +\frac{1}{2} \log \left[1+\frac{P_3}{1+Q_3} \right] \\
S & \leq \frac{1}{2} \log (1+h_{12}^2P_2) +\frac{1}{2} \log \left[1+\frac{h_{31}^2P_1+P_3}{1+Q_3} \right]\\
S & \leq \frac{1}{2} \log (1+h_{13}^2P_3) +\frac{1}{2} \log \left[1+\frac{P_2}{1+Q_2} \right] \\
&~~~ +\frac{1}{2} \log \left[1+\frac{h_{31}^2P_1}{1+Q_3} \right]\\
S & \leq \frac{1}{2} \log (1+P_1+h_{12}^2P_2)+\frac{1}{2} \log \left[1+\frac{P_3}{1+Q_3} \right]\\
S & \leq \frac{1}{2} \log (1+P_1+h_{13}^2P_3)+\frac{1}{2} \log \left[1+\frac{P_2}{1+Q_2} \right]\\
S & \leq \frac{1}{2} \log (1+h_{12}^2P_2+h_{13}^2P_3)+\frac{1}{2} \log \left[1+\frac{h_{31}^2P_1}{1+Q_3} \right]\\
S & \leq \frac{1}{2} \log (1+P_1+h_{12}^2P_2+h_{13}^2P_3)\\
2S & \leq \frac{1}{2} \log (1+P_1+h_{12}^2P_2+h_{13}^2P_3)\\
&+\frac{1}{2} \log \left[1+\frac{P_2}{1+Q_2} \right]+ +\frac{1}{2} \log \left[1+\frac{h_{31}^2P_1+P_3}{1+Q_3} \right].
\end{align*}
where $ Q_2=h_{21}^2P_1+h_{23}^2P_3$, $Q_3=h_{23}^2P_2$. Depending on the channel and power constraints one of the above inequalities will be dominant.

\end{example}
  
\begin{example}
In this example, using Theorem \ref{thm:achGIC}, \ref{thm:convjoint} we will find the second set of channel conditions under which sum capacity is achieved by a S-HK scheme. Consider a 3-user GIC with S-HK scheme given by  $I(1)=\{3\}$, $I(2)=\{1,3\}$, $I(3)=\{\}$. Let $m=1$, $k=2$. Here $m$, $k \notin I(i)$, $i \in \{1,3\}$. First we will find the converse conditions or the conditions under which sum rate
\begin{IEEEeqnarray}{lcr}
S \leq & \frac{1}{2} \log \left[1+\frac{(P_1+h_{12}^2P_2)}{1+Q_1} \right]  +  \frac{1}{2} \log (1+P_3)  \label{eqn:upperb2}
\end{IEEEeqnarray}
is an upper bound. For $I(1)=\{3\}$, $I(2)=\{1,3\}$, $I(3)=\{\}$, inequalities (\ref{eqn:jointfin1}),(\ref{eqn:jointfin4}) gives the same set of conditions
\begin{align}
h_{13}^2 & \leq \rho_3^2 (1-\rho_2^2) \label{eqn:concondt1}.
\end{align}
(\ref{eqn:jointfin2}) does not give any condition. (\ref{eqn:jointfin3}) implies
\begin{align}
\rho_2 h_{12} &=1+h_{13}^2 P_3 \label{eqn:concondt2}
\end{align}
Combining (\ref{eqn:concondt1}), (\ref{eqn:concondt2}), sum rate in (\ref{eqn:upperb2}) is an upper bound for all the channels satisfying
\begin{equation}
h_{13}^2+ \left(\frac{1+h_{13}^2P_3}{h_{12}} \right)^2 \leq 1 \label{eqn:concondtfin}
\end{equation}
For achievablity conditions, first we will find all achievable sum rates of the S-HK scheme using Theorem \ref{thm:achGIC}. Observe that $\mathcal{J}_1 \subseteq \{1,2\}$, $\mathcal{J}_2 \subseteq \{2\}$, $\mathcal{J}_3 \subseteq \{1,2,3\}$. For $l=1,2,3$, the possible sets of $\mathcal{J}_1$, $\mathcal{J}_2$, $\mathcal{J}_3$ such that $\underset{i \in {[K]}}{\bigcup} \mathcal{J}_i= {\cal S}_l$ are given in table \ref{table:ex3}.
 
\begin{table}
\centering
\begin{tabular}{|p{1cm} | p{1cm} | p{1cm}| p{1cm}|} 
  \hline
  $l$ & $\mathcal{J}_1$ & $\mathcal{J}_2$ & $\mathcal{J}_3$ \\ \hline
 $l=1$ & $\{\}$ & $\{\}$ & $\{1,2,3 \}$ \\
 \cline{2-4}
  & $\{\}$ & $\{2\}$ & $\{1,3 \}$ \\
 \cline{2-4}
 & $\{1\}$ & $\{\}$ & $\{2,3 \}$ \\
 \cline{2-4}
 & $\{1\}$ & $\{2\}$ & $\{3 \}$ \\
 \cline{2-4}
 & $\{2\}$ & $\{\}$ & $\{1,3 \}$ \\
 \cline{2-4}
 & $\{1,2\}$ & $\{\}$ & $\{3 \}$ \\
 \hline
\end{tabular}
\caption{Set of $\mathcal{J}_i$ such that $\underset{i \in {[K]}}{\bigcup} \mathcal{J}_i= {\cal S}_l$ for S-HK with $I(1)=\{3\}$, $I(2)=\{1,3\}$, $I(3)=\{\}$.}
\label{table:ex3}
\end{table}
Achievable sum rates for S-HK with $I(1)=\{3\}$, $I(2)=\{1,3\}$, $I(3)=\{\}$ are given by
\begin{align}
S & \leq \frac{1}{2} \log \left[1+P_3+h_{32}^2P_2+h_{31}^2P_1 \right] \label{eqn:achcond1} \\
S & \leq \frac{1}{2} \log \left[1+ \frac{P_2}{1+Q_2} \right]+ \frac{1}{2} \log \left[1+P_3+h_{31}^2 P_1 \right]\\
S & \leq \frac{1}{2} \log \left[1+\frac{P_1}{1+Q_1} \right]+ \frac{1}{2} \log \left[1+P_3+h_{32}^2P_2 \right]\\
S & \leq \frac{1}{2} \log \left[1+\frac{P_1}{1+Q_1}\right]+ \frac{1}{2} \log \left[1+\frac{P_2}{1+Q_2} \right]\\
& ~~~~~+\frac{1}{2}  \log(1+P_3)\\
S & \leq \frac{1}{2} \log \left[1+\frac{h_{12}^2P_2}{1+Q_1} \right]+ \frac{1}{2} \log \left[1+P_3+h_{31}^2P_1 \right]\\
S & \leq \frac{1}{2} \log \left[1+\frac{P_1+h_{12}^2P_2}{1+Q_1} \right]+ \frac{1}{2} \log (1+P_3 ) \label{eqn:achcond6}
\end{align}
where $Q_1= h_{13}^2P_3$, $Q_2=h_{21}^2P_1+h_{23}^2 P_3$. We want (\ref{eqn:achcond6}) to be dominant among the inequalities (\ref{eqn:achcond1})-(\ref{eqn:achcond6}) which gives the conditions
\begin{IEEEeqnarray}{lCr}
(1+P_3)(P_1+h_{12}^2P_2) \leq (h_{32}^2P_2+h_{31}^2P_1)(1+Q_1) \label{eqn:achcondfin1}\\
(1+P_3) \left(\frac{P_1+h_{12}^2P_2}{1+Q_1} \right) \leq\frac{(1+P_3+h_{31}^2P_1)P_2}{1+Q_2} \nonumber \\
~~~~~~~~~~~~~~~~~~~~~~~~~~~~~~~~~~~~ + h_{31}^2P_1\\
h_{12}^2 (1+P_3) \leq h_{32}^2(1+P_1+Q_1)\\
h_{12}^2(1+Q_2) \leq (1+P_1+Q_1) \\
(1+P_3) \leq (1+Q_1+h_{12}^2P_2)h_{31}^2 \label{eqn:achcondfin5}
\end{IEEEeqnarray}
\end{example}
Therefore, under conditions (\ref{eqn:achcondfin1})-(\ref{eqn:achcondfin5}) and (\ref{eqn:concondtfin}), sum capacity is achievable by S-HK scheme with $I(1)=\{3\}$, $I(2)=\{1,3\}$, $I(3)=\{\}$ and the sum capacity is given by (\ref{eqn:upperb2}).

\end{document}